\newtheorem{proof}{Proof}
\newtheorem{definition}{Definition}
\newtheorem{lemma}{Lemma}
\newtheorem{theorem}{Theorem}
\begin{document}
\bibliographystyle{ieeetr}
%
\title{Vehicular Multi-Tier Distributed Computing with Hybrid THz-RF Transmission in Satellite-Terrestrial Integrated Networks}
%
%
%

\author{Ni~Zhang,~Kunlun~Wang,~Wen~Chen,~Jing~Xu,~Yonghui~Li,~and~Arumugam~Nallanathan
\thanks{N. Zhang, K. Wang and J. Xu are with the Shanghai Key Laboratory of Multidimensional Information Processing, East China Normal University, Shanghai
200241, China, and also with the School of Communication and Electronic Engineering, East China Normal University, Shanghai 200241, China (e-mail: 71255904050@stu.ecnu.edu.cn; klwang@cee.ecnu.edu.cn; jxu@cee.ecnu.edu.cn).}
\thanks{W. Chen is with the Department of Electronic Engineering, Shanghai Jiao Tong University, Shanghai 200240, China (e-mail: wenchen@sjtu.edu.cn).}
\thanks{Y. Li is with the School of Electrical and Information Engineering, The University of Sydney, Australia (e-mail: yonghui.li@sydney.edu.au).}
\thanks{A. Nallanathan is with the School of Electronic Engineering and Computer Science, Queen Mary University of London, E1 4NS London, U.K. (e-mail: a.nallanathan@qmul.ac.uk).}}

\maketitle

\begin{abstract}
In this paper, we propose a  Satellite-Terrestrial Integrated Network (STIN) assisted vehicular multi-tier distributed computing (VMDC) system leveraging hybrid terahertz (THz) and radio frequency (RF) communication technologies. Task offloading for satellite edge computing is enabled by THz communication using the orthogonal frequency division multiple access (OFDMA) technique. For terrestrial edge computing, we employ non-orthogonal multiple access (NOMA) and vehicle clustering to realize task offloading. We formulate a non-convex optimization problem aimed at maximizing computation efficiency by jointly optimizing bandwidth allocation, task allocation, subchannel-vehicle matching and power allocation. To address this non-convex optimization problem, we decompose the original problem into four sub-problems and solve them using an alternating iterative optimization approach. For the subproblem of task allocation, we solve it by linear programming. To solve the subproblem of sub-channel allocation, we exploit many-to-one matching theory to obtain the result. The subproblem of bandwidth allocation of OFDMA and the subproblem of power allocation of NOMA are solved by quadratic transformation method. Finally, the simulation results show that our proposed scheme significantly enhances the computation efficiency of the STIN-based VMDC system compared with the benchmark schemes.
\end{abstract}

\begin{IEEEkeywords}
Satellite-Terrestrial Integrated Networks (STIN), task offloading, orthogonal frequency division multiple access (OFDMA), non-orthogonal multiple access (NOMA), alternating optimization algorithm (AO), many-to-one matching.
\end{IEEEkeywords}

%
\IEEEpeerreviewmaketitle

\section{Introduction}
%
%
%
%
With the widespread adoption of fifth generation (5G) networks and the development of sixth generation (6G) technology, access to more devices, low latency and low energy consumption with high reliability  have become key research priorities \cite{10660910}. Internet of Vehicles (IoV) has significant demands in these areas. IoV is driving the automotive industry toward intelligence by integrating information interaction between vehicles. However, this progress also introduces new challenges.
\par Existing mobile communication networks may not be able to meet the requirements of emerging technologies and applications that require low latency and high efficiency. As a result, the focus has shifted to developing advanced communication network architectures to build integrated network systems across regions, airspace, and sea domains to achieve a truly global network with seamless coverage. Satellite-Terrestrial Integrated Networks (STIN) has attracted significant attention by integrating satellite systems, airborne networks, and terrestrial communications. STIN addresses the limitations of single-network systems, particularly in challenging environments like oceans and mountains, where terrestrial communication systems cannot deliver reliable high-speed wireless access. STIN provides wide coverage, high throughput, flexible deployment. Computation offloading in STIN enhances computation efficiency and reduces energy consumption. Meanwhile, it ensures service quality by optimizing the allocation of computing tasks \cite{10542407} \cite{10498091}. This provides solid technical support for 6G networks to achieve seamless global coverage, high-speed intelligence, and secure communication services \cite{9520341}. Low Earth Orbit (LEO) satellites, with their low orbital altitudes and high speeds, play a critical role in STIN, in expanding communication links and connecting remote regions not covered by terrestrial networks to the global communication system. Meanwhile, they can also provide communication connections for IoT devices distributed all over the world. 
\par Orthogonal frequency division multiple access (OFDMA) technology has become the key transmission technology in terrestrial wireless communication networks \cite{8647301}\cite{9530374}. Its integration into satellite mobile communication systems facilitates the convergence of satellite and terrestrial networks. Terahertz (THz) communication technology with its immense bandwidth, supports ultra-high wireless communication speeds, making it crucial for meeting future communication demands. OFDMA technology enhances spectral efficiency by dispersing data across multiple closely spaced sub-carriers \cite{9399121}. The combination of them enables ultra-high wireless communication speeds especially in the area of satellite communications \cite{9371019} \cite{9347979} .
\par In addition, non-orthogonal multiple access (NOMA) technology allows multiple users to share the same wireless resources by performing successive interference cancellation (SIC) at the receiver to eliminate interference \cite{9850428}. This improves spectrum utilization supports more user connections, alleviates data traffic congestion, and reduces latency \cite{10278913} \cite{9013841}. Thus, NOMA can be utilized in terrestrial task offloading. The joint optimization of task offloading, user clustering, computational resource allocation and transmit power control enhances system task processing efficiency, reduces task processing time and improve reliability and computation performance.


 

\subsection{Related Work}
The task offloading in the STIN system transfers computational tasks from users to edge servers (e.g. satellites, unmanned aerial vehicles (UAVs), high-altitude platforms), which reduces user computation loads and significantly reduces the processing delay of computational tasks\cite{8493149}. Liu \MakeLowercase{\textit{et al.}} \cite{9626560} proposed a wireless power transmission (WPT)-enabled space-air-Ground power Internet of Things (SAG-PIoT) architecture that assigns tasks to local devices, UAV and LEO for computation. Dynamic task offloading and resource scheduling have been investigated and demonstrated the superior performance of task computation offloading. Chai \MakeLowercase{\textit{et al.}} \cite{10024305} developed a model for a joint multi-task mobile edge computing (MEC) system based on UAV-assisted aerial base station(BS). Using a training method that combines the attention mechanism and the proximal policy optimization collaborative (A-PPO) algorithm to train the data, they minimized consumption costs during task offloading. Chen \MakeLowercase{\textit{et al.}}\cite{9552467} considered a satellite-air-ground integrated network (SAGIN) network model that contains multiple LEOs connected to a cloud server, a UAV and a BS. Their distributed robust latency optimization algorithm reduced latency by offloading tasks from UAVs to BSs and cloud servers. Di \MakeLowercase{\textit{et al.}}\cite{8571192} proposed an architecture for terrestrial satellite networks where each small cell assists the macro cell in offloading traffic, thus enabling data offloading efficiently. By optimizing terrestrial data offloading, satellite association, and resource allocation, they maximized total rates and user access.
\par The THz band offers abundant frequency resources and an exceptionally wide operational bandwidth, enabling THz communication systems to support ultra-high data rates\cite{9352550} \cite{9672716} \cite{9629223}. The THz band provides numerous advantages, including high transmission rates, large capacity, enhanced security, and strong anti-interference capabilities. Wu \MakeLowercase{\textit{et al.}} \cite{10681251} proposed a MEC system consisting of a user, an IRS and a UAV, where the user can make a decision to choose to offload the task to the UAV or to execute it locally. Using THz technology, the system increased transmission rates, improved robustness, and reduced energy consumption. Wang \MakeLowercase{\textit{et al.}}\cite{10278635} proposed an intelligent reflecting surface (IRS)-assisted MEC system that solves the UAV energy minimization problem by alternating optimization technique. Yuan \MakeLowercase{\textit{et al.}}\cite{10341311} optimized a SAGIN model considering a multi-band THz/RF channel. The overall performance of SAGIN is improved by optimizing the allocation of THz and RF channels and maximizing the node fairness index (NFI) for multi-band THz and radio frequency (RF) communications in SAGIN. By using THz technique, the spectrum resources can be utilized more efficiently and the overall spectral efficiency of the network can be improved. Yu \MakeLowercase{\textit{et al.}}\cite{9328513} constructed a framework of EC-SAGINs consisting of remote vehicles, LEOs, medium earth orbits (MEOs) and high earth orbits (HEOs). They have addressed the challenges faced by vehicles in remote areas without terrestrial edge computing (TEC) facilities, such as 5G BSs and multiple road side units (RSUs), to access cloud server services. Their model optimized joint offloading decisions and caching strategies while integrating THz technology to enhance network coverage, especially in areas that are geographically isolated or lack the reach of existing terrestrial infrastructure.
\par The use of NOMA for terrestrial communications in STIN allows multiple users to share the same time and frequency resources. It differentiates the signals of users by using SIC \cite{9417280}. This significantly improves the efficiency of spectrum utilization. Wang \MakeLowercase{\textit{et al.}}\cite{9036885} proposed a NOMA-based task scheduling framework that offloads multiple task nodes to multiple nearby auxiliary nodes for execution of their tasks via NOMA. The total cost is minimized by optimizing task scheduling and sub-channel allocation. NOMA technique helps to reduce the delay of task execution by increasing the data transmission rate and reducing the interference between users, making it highly suitable for the large-scale device requirements of in Industrial Internet of Things (IIoT) systems. Sheng \MakeLowercase{\textit{et al.}} \cite{8972932} analyzed the relationship between the offloading latencies of a pair of NOMA users and the mutual co-channel interference they experience. Simulations demonstrated that NOMA effectively reduces average user offloading delays while accommodating more users for task offloading. Ding \MakeLowercase{\textit{et al.}} \cite{9679390} studied a general hybrid  Non-Orthogonal Multiple Access-Mobile Edge Computing (NOMA-MEC) offloading strategy. The proposed strategy provides users with more flexible task offloading opportunities. Tasks can be computed locally by users, and at the same time, they can also be offloaded to BS for computation. It improves the task computing efficiency.  
Xu \MakeLowercase{\textit{et al.}}\cite{9272879} investigated the joint task offloading and resource allocation problem of MECs in NOMA-based heterogeneous networks (HetNets) to achieve minimizing the energy consumption of all users. The application of NOMA technique in HetNets aims to achieve a trend of improving system throughput and spectral efficiency.

\subsection{Contributions}
However, the aforementioned works did not consider the integration of THz and OFDMA in an STIN-based vehicular task offloading system. Therefore, based on previous research, we propose a vehicular multi-tier distributed computing (VMDC) framework in STIN system that utilizes OFDMA in the THz to achieve terrestrial-to-satellite communication. By using OFDMA technology, this wide frequency band can be fully exploited and divided into multiple orthogonal sub-carriers, and each sub-carrier can independently carry data, which greatly increases the total data transmission rate and thus optimizes the performance of the system. Satellite communication in conventional frequency bands often encounters numerous interference sources. The high-frequency properties of THz reduce the likelihood of interference signal generation and propagation, which improves the anti-interference ability of satellite communication.
\par The main contributions of this work are as follows:
\begin{itemize}
\item In this paper, we propose a VMDC framework in STIN system and optimize bandwidth allocation, task allocation, sub-channel matching and power allocation to maximize the computation efficiency, defined as the ratio of the sum of task data sizes to the sum of energy consumption.
\item We propose a hybrid THz-RF transmission scheme for multi-tier distributed task offloading. Using this framework, the closed-form expressions of task transmission delay and energy consumption are obtained through theoretical analysis of the channel model.
\item Based on this model we need to solve the problem of maximizing the computation efficiency of a non-convex problem. We employ an alternating optimization algorithm to decouple it into four subproblems: task allocation, bandwidth allocation in OFDMA-based communication, power allocation in NOMA-based communication, and sub-channel matching. Specifically, the sub-channel matching problem is solved using the many-to-one approach. And we propose a novel algorithm based on the theory of many-to-one two-sided matching.
\item In the simulations, extensive data results are provided to validate the effectiveness of our proposed optimization scheme. Compared with various allocation optimization methods, the proposed scheme demonstrates significantly higher computation efficiency. Moreover, compared to traditional allocation methods, it better meets the needs of individual vehicles, effectively reduces energy consumption, and enhances computation efficiency.
\end{itemize}
\par Notations: In this paper, matrices are denoted by uppercase boldface characters. ${{\mathbb{R}}^{m\times n}}$ and $\mathbb{E}\left ( \cdot  \right )$ denote the complex space of dimension and the expectation. The conjugate-transpose of matrix A is denoted by $A^H$.  

\section{System Model}
\subsection{System Description}
A STIN-assisted VMDC system is shown in Fig. \ref{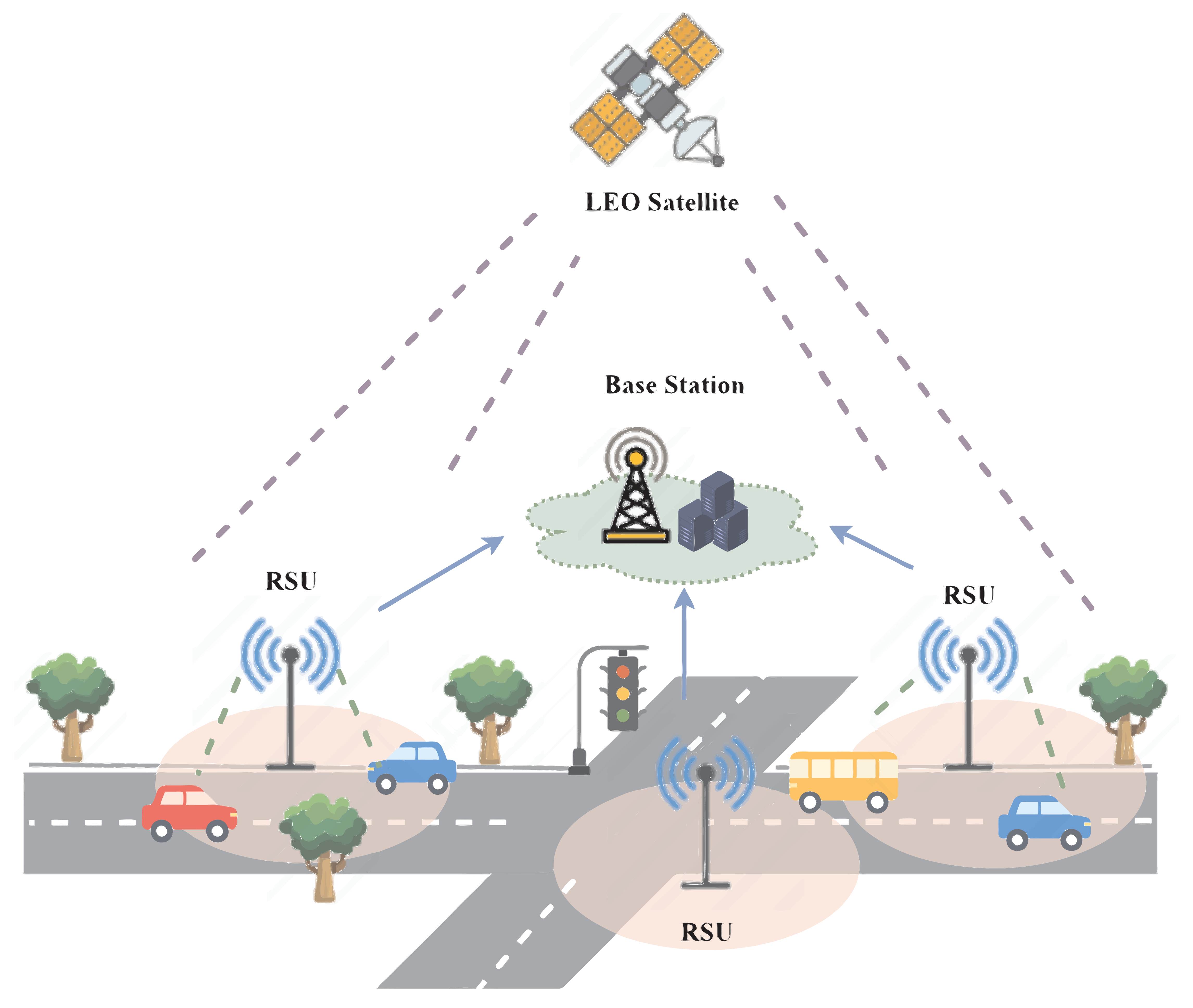}, which consists of multiple vehicles, multiple RSUs and a LEO. We denote the set of vehicles by $H=\left \{ H_{1}, H_{2},\cdots, H_{m}\right \}$. The sub-channels from vehicles to RSU are represented by $\left \{ SC_{1}, SC_{2},\cdots, SC_{f}\right \}$. The set of RSUs can be 
denoted as $\left \{U_{1}, U_{2},\cdots, U_{l}\right \}$. There is a single $J$-antenna BS with edge computing server in the proposed VMDC system. The $U$-antenna RSU serves as a relay without computational capability, which increases the transmission distance of the task, and reduces the computational energy consumption and delay.
For terrestrial task offloading, the task is transmitted from vehicle to the RSU via NOMA technique. Then the task is also transmitted from RSU to the BS by the multiple-input multiple-output (MIMO) technique. For satellite task offloading, the vehicle transmits a portion of the task directly to the satellite for computation by the OFDMA technique in the THz band. According to \cite{9328513}, the vehicle can support both THz and RF communications. The vehicle ensures that they do not interfere with each other by having separate antennas. 
\begin{figure}[htbp]
\centerline{\includegraphics[scale=0.12]{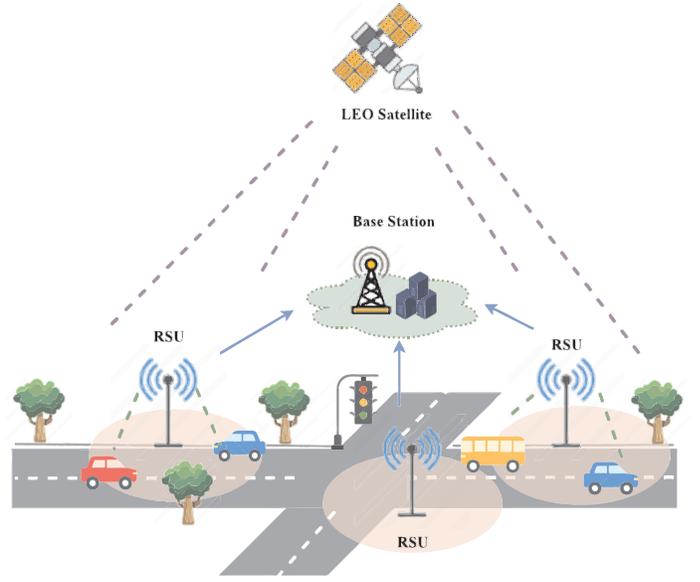}}
\caption{System model.}
\label{sat1.eps}
\end{figure}
\par We divide the vehicles into $N$ clusters with $K$ vehicles in each cluster, $K$ is random and the largest $K$ is $q_{max}$. $H_{n,k}$ is denoted as the $k$th vehicle in the $n$th cluster. In the NOMA transmission process, $\eta_{f,n}\in\left \{ 0,1 \right \} $ denotes the sub-channel matching coefficients to match the $F$ sub-channels with the $N$ cluster vehicles, $\eta_{f,n}=1$ means that the $f$th sub-channel is occupied by the $n$th cluster of vehicles, and $\eta_{f,n}=0$ means that the $f$th sub-channel is not occupied by the $n$th cluster. A cluster can occupy multiple sub-channels, but a sub-channel can be occupied by only one cluster of vehicles. The sub-channels are mutually orthogonal. 
\par Set $r_m$ as the coverage radius of the RSU and $e_m$ as the vertical distance between the RSU and the vehicle. 
$I_{m}$ is the distance exercised by the vehicle through the coverage radius of the RSU. It can be denoted as:
\begin{align}
I_{\rm m}=2\sqrt{r_{\rm m}^{2}-e_{\rm m}^{2} } .\label{1} 
\end{align}
Thus the time spent by the vehicle traveling within the coverage radius of an RSU is denoted as:
\begin{align}
t_{\rm m}^{\rm stay}=\frac{I_{\rm m} }{V_{\rm m}},\label{2} 
\end{align}
where $V_{\rm m}$ denotes the vehicle speed.
\subsection{Communication Model}
\subsubsection{Vehicles-to-Satellite}
In this STIN network, the vehicle communicates with a LEO satellite with THz band for the process of vehicle offloads tasks to the satellite. This satellite can cover all positions within the area. 
 
\par Following previous work \cite{10341311}, the THz communication channel is modeled as follows. Due to the presence of molecular absorption and dense molecular arrangement, line-of-sight (LOS) transmission is much more significant than nonlinear-line-of-sight (NLOS) transmission \cite{9661374}.
The received signal power $P_m^s$ at the LEO satellite from the vehicle $m$ is given by \cite{10341311}
\begin{align}
P_{\rm m}^{\rm s} =\rho P_{\rm m}^{\rm H} \left ( \frac{c}{4\pi f_{\rm m}^{\rm s} } \right ) ^{2} \chi _{s} G_{\rm m}^{\rm s} G_{\rm m}^{\rm v} e^{-R_{\rm a}\left ( f_{\rm m}  \right ) d_{\rm 0}  } d_{\rm 0}^{-2 } ,\label{3}
\end{align}
where $P_{m}^{H}$, $\rho$, $f_{m}^{s}$, $\chi_{s}$, $R_{a}\left ( f_{m}  \right )$ and $d_{0}$  respectively represent the transmission power of the vehicle, the channel power gain at the reference distance $d_0=1$ m, the frequency used by the vehicle, small-scale fading, molecular absorption coefficient and distance from vehicle to satellite. $G_{m}^{s}$ and $G_{m}^{v}$ represent beam-forming gains of the main and side lobes \cite{9119462}, respectively.
\par The received SINR at the satellite is denoted as:
\begin{align}
SINR_{\rm m} ^{\rm s}=\frac{P_{\rm m} ^{\rm s}}{H^{\rm s}} ,\label{4}
\end{align}
where $H^{s}$ represents the noise power at the LEO satellite. 
\par Vehicles offload tasks to the LEO satellite by OFDMA technology, which allows multiple vehicles to communicate with the satellite at the same time. OFDMA divides the entire frequency band so that multiple vehicles use different orthogonal sub-carriers, and there is no signal interference between different vehicles. We denote $\alpha_m$ as the subcarrier pairing result between the vehicle and subcarrier, $\alpha_{m}$ is generally a binary variable that can only take values of 0 and 1. As the number of subcarriers increases, the system is able to allocate the bandwidth to each vehicles more accurately, which allows $\alpha_{m}$ to take a range of values closer to continuous values between 0 and 1 \cite{8438896}. Therefore, the task transmission rate from the vehicle to the satellite is given by
\begin{align}
R_{\rm m}^{\rm s} =\alpha _{\rm m}B_{\rm s}\log_{2}{\left ( 1+ SINR_{\rm m} ^{\rm s}\right ) } ,\label{5}
\end{align}
where $B_{s}$ is the total bandwidth for the transmission from vehicles to satellite.
\subsubsection{Vehicles-to-RSUs}
For terrestrial VEC, the transmission of the task from the vehicle to the RSU and from the RSU to the BS follows a Rayleigh distribution by \eqref{3}, and $h_{n,k}$ is set to be the channel gain, which is denoted as:
\begin{align}
 h_{\rm n,k} =U_{\rm n,k}\sqrt{{\rm L}_{\rm n,k}   },\label{6} 
\end{align}
where $U_{n,k}$ denotes the small-scale fading parameter and $L_{n,k}$ denotes the large-scale fading parameter \cite{9827408}.
Set the transmit signal $S_{n,k}$ by the vehicle, $y$ is denoted as the signal received by the RSU
\begin{align}
 y=\sum_{\rm k=1}^{\rm K} h_{\rm n,k} \sqrt{P_ {\rm n,k}} S_{\rm n,k} +w ,\label{7} 
\end{align}
where $P_{n,k}$ is the transmission power of $H_{n,k}$.
\par When a vehicle offloads a task to the RSU via the NOMA technique, there is intra-cluster interference between the vehicles. The sub-channels are orthogonal to each other, so there is no cross-cell interference interference. We assume that the channel gains satisfy: $h_{\rm n,1} > h_{\rm n,2} > \cdots >  h_{\rm n,k} $. And SIC technique is applied to the RSU to decode the signals sent by vehicles occupying the same cluster in the same sub-channel. Therefore, the SINR of this transmission process can be expressed as:
\begin{align}
SINR_{\rm n,k}=\frac{P_{\rm n,k}\left | h_{\rm n,k}  \right | ^{2} d_{\rm nR,m}^{-\rho ^{'} }  }{\sum_{k+1}^{K} \left | h_{\rm n,k}  \right | ^{2}P_{\rm n,k} d_{\rm nR,m}^{-\rm \rho ^{'} } +\sigma ^{2}_{\rm n,k} } ,\label{8} 
\end{align}
where $\sigma ^{2}_{\rm n,k}$, $h_{\rm n,k}$ and $d_{\rm nR,m}$ respectively represent the noise power, small-scale fading with $h_{\rm n,k}\sim \mathcal{CN} \left ( 0,1 \right )$ and the distance from vehicle to RSU. And $d_{\rm nR,m}$ can be expressed as:
\begin{align}
d_{\rm nR,m} =\sqrt{a^{2}+\left ( \frac{I_{\rm m} }{2}-V_m{\rm t } \right ) ^{2}  } .\label{9} 
\end{align}
\par The transmission rate of the $k$-th vehicle in the $n$-th cluster is $R_{\rm n,k}$ given by
\begin{align}
R_{\rm n,k} =\sum_{\rm f\in F}w_{\rm f,k} \eta _{\rm f,n} \log_{2} \left ( 1+SINR_{\rm n,k} \right ) .\label{10} 
\end{align}
\par The vehicle is moving during the NOMA transmission, so the transmission channel is also changing dynamically. To make the calculation more accurate, the average rate is used, which can be expressed as:
\begin{align}
\overline{R} _{\rm n,k}=\frac{\int_{0}^{t_{\rm m}^{stay}}R_{\rm n,k}dt }{t_{\rm m}^{\rm stay} }.\label{11} 
\end{align}
\subsubsection{RSUs-to-BS}
RSU acts as a relay during terrestrial transmission from RSU to BS. The transmit signal is set as $\boldsymbol{S}=\left [ S_1,\dots, S_m,\dots,S_M\right ]^T$. 
Therefore, the signal received at the RSU is given by 
\begin{align}
\boldsymbol{y}=\boldsymbol{HBS}+\boldsymbol{n_{\rm 0}},
\end{align}
where $\boldsymbol{B}=\left [ \boldsymbol{b_{\rm 1}},\dots,\boldsymbol{b_{\rm u}},\dots,\boldsymbol{b_{\rm U}}\right ]^H $, $\boldsymbol{H}=\left [ \boldsymbol{h_{\rm 1}},\dots,\boldsymbol{h_{\rm j}},\dots,\boldsymbol{h_{\rm J}}\right ] ^H$ and $\boldsymbol{n_{\rm 0}}=\left [ \boldsymbol{n_{\rm 1}},\dots,\boldsymbol{n_{\rm j}},\dots,\boldsymbol{n_{\rm J}}\right ]^H $ are the beamforming matrix of $U\times M$, the channel matrix of $J\times U$ and channel noise of $J\times 1$ respectively. $\boldsymbol{b_{\rm u}}$ and $\boldsymbol{h_{\rm j}}$ are the rows of $\boldsymbol{B}$ and $\boldsymbol{H}$ respectively.
\par The received SINR can be expressed as:
\begin{align}
SINR_{m}=\frac{P_{\rm R}\left | \boldsymbol{h}^H_{\rm m} \boldsymbol{b}_{\rm m} \right | ^2}{\sum_{i\ne m}^{M} P_{R}\left |\boldsymbol{h} ^H_{\rm i} \boldsymbol{b}_{\rm i}\right |^2+ \boldsymbol{n}^2},
\end{align}
where $P_{\rm R}$ and $\boldsymbol{n}^2$ respectively represent the RSU's transmit power and the noise power. \(\boldsymbol{b}_m\) and $\boldsymbol{h}_m$ respectively represent \(m\)-th row of the beamforming matrix \(\boldsymbol{B}\) 
and the \(u\)-th row of the channel matrix \(\boldsymbol{H}\).
In order to simplify the analysis, we consider obtaining the precoding matrix B by the traditional optimization method \cite{9305793}. For example, we can design the beamforming matrix to minimize the mean square error (MMSE) between the received signal and the desired signal, which is applicable to multi-user MIMO systems \cite{9417429} \cite{8782618}. Therefore, the transmission rate at which tasks are offloaded from the RSU to the BS is given by 
\begin{align}
R_{\rm m}^{\rm RSU,BS} =\mathbb{E}\left \{ B_{\rm R}\log_{2}{\left ( 1+ SINR_{\rm m}  \right ) } \right \},\label{14}
\end{align}
where $B_{\rm R}$ is the bandwidth for the transmission from RSU to BS.
\subsection{Computation Model}
We set the task transmitted by each vehicle as $R_{m} =\left ( L_{m},C_{m} ,T_{m}^{\max} \right )$, where $L_{m}$ represents the total number of CPU cycles required to perform task computation, where $C_{m}$ represents the size of task data and $T_{m}^{\max}$ represents the maximum time allowed for processing tasks. 
\subsubsection{Local Computing}
Vehicles have computation capabilities. We set $\theta _{\rm m}$ and $\zeta _{\rm m}$  as the task allocation coefficient,  which are continuous variable. They determine how much task is allocated to the BS and the LEO satellite. $\theta_{\rm m}L$ is the amount of tasks handled by the BS, and $\zeta_{\rm m}L$ is the amount of the task to be handled by the satellite, so the amount of the task to be handled locally is $\left(1 - \zeta_{\rm m} - \theta_{\rm m}\right)L_{\rm m}$. Therefore, 
the local computation delay is expressed as:
\begin{align}
T_{\rm m}^{\rm loc}=\frac{\left ( 1-\theta _{\rm m}-\zeta  _{\rm m}  \right ) L_{\rm m}}{Z _{\rm m}} ,\label{15}
\end{align}
and the local computation energy consumption is expressed as:
\begin{align}
E_{\rm m}^{\rm loc} =\left ( 1-\theta _{\rm m} - \zeta  _{\rm m}\right ) L_{\rm m} \varphi _{\rm m}^{\rm loc} \left ( Z_{\rm m }\right ) ^{2} ,\label{16}
\end{align}
where $\varphi _{\rm m}^{\rm {loc}}$ denotes the processing energy
coefficient for the vehicle CPU. And $Z _{\rm m}$ represents the vehicle's CPU frequency. 
\subsubsection{Edge Computing}
The vehicle has computation capability. The vehicle is allocated the computation task of $\left ( 1-\theta _{\rm m}-\zeta  _{\rm m}  \right )C_m$, the satellite is allocated task of $\zeta _{\rm m}C_{\rm m}$ and the BS is allocated task of $\theta _{\rm m}C_{\rm m}$. Therefore the transmission delay $T_{\rm m}^{\rm H,RSU}$ for vehicle offloading task to RSU is denoted as:
\begin{align}
T_{\rm m}^{\rm H,RSU} =\frac{\theta _{\rm m} C_{\rm m}}{\overline{R}_{\rm n,k} }.\label{17}
\end{align}
And the transmission delay of the task from the RSU to the BS is denoted as:
\begin{align}
T_{\rm m}^{\rm RSU,BS} =\frac{\theta _{\rm m}C_{\rm m}}{R_{\rm m}^{\rm RSU,BS} }  .\label{18}
\end{align}
Since the RSU as a relay, it does not have computational capability. The delay consumed by the task to offload from the vehicle to BS for computation is denoted as:
\begin{align}
T_{\rm m}^{\rm BS} =\frac{\theta _{\rm m}L_{\rm m}}{Z_{\rm B} }.\label{19}
\end{align}
The task offloading delay from the vehicle to the LEO satellite is given by:
\begin{align}
T_{\rm m}^{\rm H,sat} =\frac{\zeta _{\rm m}C_{\rm m}}{R_{\rm m}^{\rm S} }.\label{20}
\end{align}
The computation delay by the LEO satellite's computation is denoted as:
\begin{align}
T_{\rm m}^{\rm sat} =\frac{\zeta _{\rm m}L_{\rm m}}{Z_{\rm S} } .\label{21}
\end{align}
Thus, total delay by ground task offloading is given by 
\begin{align}
T_{\rm m}^{\rm off}=T_{\rm m}^{\rm H,RSU}+T_{\rm m}^{\rm RSU,BS}+T_{\rm m}^{\rm BS} .\label{22}
\end{align}
\par On the other hand, the energy consumed by the vehicle to offload the task to the RSU is expressed as:
\begin{align}
E_{\rm m}^{\rm up1} =P_{\rm m}^{\rm H} T_{\rm m}^{\rm H,RSU},\label{23}
\end{align}
where $P_{\rm m}^{\rm H}$ represents the vehicle's transmit power. The energy consumed by the transmission of the task from the RSU to the BS is expressed by 
\begin{align}
E_{\rm m}^{\rm up2} =P_{\rm R} T_{\rm m}^{\rm RSU,BS}.\label{24}
\end{align}
The energy consumed by the total number of CPU cycles of $\theta _{\rm m}L_{\rm m}$ to perform the computation at BS is denoted by 
\begin{align}
E_{\rm m}^{\rm BS} =\theta _{\rm m}L_{\rm m}\varphi _{\rm m}^{\rm B} \left (Z_{\rm B}  \right ) ^{\rm 2},\label{25}
\end{align}
where $Z_{\rm B}$ represents the BS’s CPU frequency. The STIN-assisted VMDC system network also contains the transmission energy consumption of the vehicles to the LEO satellite and  the LEO satellite's computation energy consumption. For vehicle to satellite transmission, the vehicle energy consumption is denoted by 
\begin{align}
E_{\rm m}^{\rm H,sat} =P_{\rm m}^{\rm H} T_{\rm m}^{\rm H,sat} .\label{26}
\end{align}
The computation energy consumption at the LEO satellite is expressed as:
\begin{align}
E_{\rm m}^{\rm sat} =\zeta _{\rm m} L _{\rm m} \varphi _{\rm m}^{\rm S} \left (Z_{\rm S}  \right ) ^{\rm 2},\label{27}
\end{align}
where $Z_{\rm S}$ represents the satellite’s CPU frequency. 
\par In all, the total energy consumption of the vehicle performing task offloading is given by
\begin{align}
E_{\rm m}=E_{\rm m}^{\rm loc}+E_{\rm m}^{\rm up1}+E_{\rm m}^{\rm up2}+E_{\rm m}^{\rm BS}+E_{\rm m}^{\rm H,sat}+E_{\rm m}^{\rm sat}.\label{28}
\end{align}
\par We introduce computation efficiency, denoted by $\mathbb{J}$. It is defined as the ratio of the number of bits of the task to the total energy consumption. According to (\ref{28}), the optimization objective is then formulated as:
\begin{align}
\mathbb{J} =\frac{\sum_{\rm m=1}^{\rm M}C_{\rm m}  }{\sum_{\rm m=1}^{\rm M} E_{\rm m}} .\label{29}
\end{align}
\section{Problem Formulation and Analysis}
\par In this section, we optimize the computation efficiency for our proposed STIN-assisted VMDC system. The computation efficiency reflects the delay and energy consumption of task offloading. The computation efficiency is maximized by optimizing the bandwidth allocation coefficient $\alpha_{\rm m}$ of OFDMA, power allocation coefficient $P_{\rm m}$ and sub-channel matching coefficient $\eta _{f,n}$ of NOMA and task allocation coefficient $\theta_{\rm m}$ and $\zeta_{\rm m}$.

\begin{subequations}
\begin{align}
\label{30}
\textbf{P0}:&\mathop{\max}\limits_{ \bm{\eta},\bm{P},\bm{\alpha},\bm{\theta},\bm{\zeta} }\mathbb{J}
&\begin{array}{r@{\quad}l@{}l@{\quad}l}
\end{array}
\end{align}
\begin{align}
\rm{s.t.} \quad& 0\le \theta _{\rm m} \le 1, \forall m,\\
&0\le \zeta_{\rm m}  \le 1, \forall m,\\
&0\le \alpha _{\rm m} \le 1, \forall m,\\
&\eta_{\rm f,n}\in\left \{ 0,1 \right \},\forall f,n,\\
&0\le1-\theta _{\rm m} -\zeta _{\rm m} \le1, \forall m,\\
&\sum_{n\in  N}^{} \eta _{f,n} =1,\forall f,n,\\
&\sum_{\rm m=1}^{\rm M} \alpha _{\rm m} = 1, \forall m,\\
&0\le P_{\rm m} \le P^{\rm max}, \forall m,\\
&T_{\rm m}^{\rm H,RSU} \le t_{\rm m}^{\rm stay}, \forall m,\\
&0< k< q_{\rm max},\\
&R_{\rm m}^{\rm s}\ge\frac{\xi_{\rm m}C_{\rm m}  }{T_{\rm m}^{\rm max}-T_{\rm m}^{\rm sat}  }, \forall m,\\
&R_{\rm n,k} \ge\frac{\theta_{\rm m}C_{\rm m}  }{T_{\rm m}^{\rm max}-T_{\rm m}^{\rm BS}  }, \forall m,n,k,
\end{align}
\begin{align}
&\max\left \{T _{\rm m}^{\rm loc},T_{\rm m}^{\rm off}, T_{\rm m}^{\rm H,sat}+T_{\rm m}^{\rm sat}\right\} \le T_{\rm m}^{\rm max}, \forall m.
\end{align}\label{30n}
\end{subequations}

\par Constraint 30(b), 30(c) and 30(d) restrict continuous variables $\theta_{\rm m}$, $\zeta_{\rm m}$ and $\alpha_{\rm m}$ to values between 0 and 1. The sub-channel matching variable is a binary variable that can only take values of 0 and 1 in 30(e). Constraint 30(f), the $\left(1 - \zeta_{\rm m} - \theta_{\rm m}\right)$ denotes the proportion of tasks allocation locally, which is guaranteed the range of 0 and 1. Constraint 30(g) restricts the number of clusters allowed to communicate on a sub-channel. 
The $SC_{\rm f}$ is only occupied by one cluster. Constraint 30(h) restricts the range of values for bandwidth allocation, the sum of the bandwidth allocation coefficients is 1. 30(i) limits the transmit power of vehicle $P_m$. 30(j) indicates that the vehicle complete the task offloading in $ t_{\rm m}^{\rm stay}$. Constraint 30(k) means that the number of vehicles in each cluster does not exceed $q_{\rm max}$. Constraint 30(l) and 30(m) limit the transmission rate of vehicle offloading. Constraint 30(n) limits the total delay by ground task offloading and satellite task offloading to a maximum of $T_{\rm max}$. 
\par Since the problem we proposed is a non-convex fractional optimization problem. It is coupled, and thus we adopt the alternating optimization method to address it. The formulated original problem P0 can be decoupled into four sub-problems. The task allocation sub-problem is a linear programming problem that can be easily solved. The bandwidth allocation sub-problem with OFDMA and the power allocation sub-problem with NOMA are fractional problems. We utilize the quadratic transformation method to solve them. For the sub-channel matching with NOMA, we adopt the many-to-one method for solution. Finally, these solutions are iteratively optimized to obtain the optimal computation efficiency of the STIN-assisted VMDC system.
\section{THz Bandwidth Allocation and Task Allocation}
\subsection{Optimal Task Allocation $\theta_{\rm m}$ and $\zeta_{\rm m}$ with Fixed Given $\alpha_{\rm m}$, $\eta_{\rm f,n}$ and $P_{\rm m}$}
\indent The task allocation sub-problem P1 can be expressed as:
\begin{equation}
\begin{aligned}
\label{31}
&\textbf{P1}:\quad \mathop{\max}
\limits_{\bm{\theta},\bm{\zeta}}\mathbb{J}(\bm{\theta_{m},\zeta_{m}})\\
& \begin{array}{r@{\quad}l@{}l@{\quad}l}
\rm{s.t.}&30(b), 30(c), 30(f), 30(j), 30(l), 30(m), 30(n).
\end{array}
\end{aligned}
\end{equation}
\par This subproblem focuses on a linear programming problem where the objective is to maximize computation efficiency while satisfying a set of linear inequalities. Under the constraints, the optimal solution of $\zeta _{\rm m}$ and $\theta_{\rm m}$ are
\begin{align}
\zeta_{\rm m} =\frac{Z}{G_{\rm m}} -\theta_{\rm m}\frac{F_{\rm m}}{G_{\rm m}}-\frac{\phi_{\rm m}}{G_{\rm m}},\label{32}
\end{align}
\begin{align}
\theta_{\rm m} =\frac{Z}{F_{\rm m}}-\zeta _{\rm m}\frac{G_{\rm m}}{F_{\rm m}}-\frac{\phi_{\rm m}}{F_{\rm m}},\label{33}
\end{align}
\begin{align}
\phi_{\rm m} =L_{\rm m}\varphi _{\rm m}^{\rm loc}\left (  Z_{\rm m}\right ) ^2,\label{34}
\end{align}
where $Z$ is introduced by intercept (the point at which a line crosses the y-axis).
And $G_m$ and $F_m$ can be expressed as: 
\begin{equation}
\begin{aligned}
G_{\rm m}=L_{\rm m} \varphi^{s}_{\rm m} Z_{\rm S}^2 -L_{\rm m} \varphi^{\rm loc}_{\rm m} Z_{\rm m}^2 + P_{\rm m}^{\rm H}\frac{C_{\rm m}}{R_{\rm m}^{\rm S}},\label{35}
\end{aligned}
\end{equation}
\begin{equation}
\begin{aligned}
F_{\rm m}=L_{\rm m}\varphi_{\rm m}^{\rm loc}\left(Z_{\rm m}\right)^2+L_{\rm m}\varphi_{\rm m}^{\rm B}\left(Z_{\rm B}\right)^2+ P_{\rm m}^{\rm H}\frac{C_{\rm m}}{\bar{R_{\rm m}}}+\frac{P_{\rm R}C_{\rm m}}{R_{\rm m}^{\rm RSU,BS}}.\label{36}
\end{aligned}
\end{equation}

\subsection{Optimal Bandwidth Allocation $\alpha_{m}$ with Fixed Given $\theta_{m}$, $\zeta_{m}$, $\eta_{f,n}$ and $P_{m}$}
\indent The task allocation sub-problem P2 can be expressed as:
\begin{equation}
\begin{aligned}
\label{37}
&\textbf{P2}:\quad \mathop{\max}
\limits_{\bm{\alpha}}\mathbb{J}(\bm{\alpha_{m}})\\
& \begin{array}{r@{\quad}l@{}l@{\quad}l}
\rm{s.t.} &30(d), 30(h), 30(m), 30(n).
\end{array}
\end{aligned}
\end{equation}
\par The problem is a non-convex fractional problem, so we transform this problem into a linear problem that is easy to solve by quadratic transformation. According to (\ref{29}), sub-problem P2 can be transformed into the following problem by quadratic transformation \cite{boyd2004convex}:
\begin{equation}
\begin{aligned}
\mathop{\max}\limits_{\bm{\alpha, y}}&\left[2y_{\rm m}\sqrt{\sum_{\rm m=1}^{\rm M}C_{\rm m} } -y_{\rm m}^2 \sum_{\rm m=1}^{\rm M} \left ( \frac{H_{\rm m}}{\alpha_{\rm m}}+\sigma_{\rm m} \right )\right],\label{38}
\end{aligned}
\end{equation}
where $y_m$ is introduced by the quadratic transform. And $H_m$ and $\sigma_{\rm m}$ can be expressed as: 
\begin{align}
H_{\rm m}=\frac{\zeta _{\rm m}C_{\rm m}P_{\rm m}^{\rm H} }{B_{\rm s}\log_{2}{\left ( 1+SINR_{\rm m}^{\rm s} \right ) } } ,\label{39}
\end{align}
\begin{equation}
\begin{aligned}
\sigma_{\rm m}=E_{\rm m}^{\rm loc}+E_{\rm m}^{\rm up1}+E_{\rm m}^{\rm up2}+E_{\rm m}^{\rm BS}+E_{\rm m}^{\rm sat}.\label{40}
\end{aligned}
\end{equation}
\par When $\alpha_{\rm m}$ is fixed, take the partial derivative of (\ref38) with respect to $y_{\rm m}$ and set it equal to $0$, we can obtain the optimized solution. We set $y_{\rm m}^{\ast}$ and $\alpha_m^{\ast}$ be the expressions for the optimal solution of (\ref{38}). $y^{\ast}_{\rm m}$ is a function of $\alpha_m^{\ast}$, which can be represented as:
\begin{align}
y^{\ast}_{\rm m}=\frac{\sqrt{\sum_{\rm m=1}^{\rm M} C_{\rm n}} }{\sum_{\rm m=1}^{\rm M}\left ( \frac{H_{\rm m}}{\alpha_m^{\ast}}+\sigma_{\rm m} \right )  } .\label{41}
\end{align}
\par For fixed $y_m$, the objective function of problem (\ref{38}) is a concave function with respect to $\alpha_{\rm m}$, which can be solved by the Convex Optimization Toolbox(CVX). Finally, a converged optimization solution $\alpha_m^{\ast}$ can be obtained by alternatively optimizing $\alpha_{\rm m}$ and $y_m$.
\subsection{Optimal Power Allocation $P_{m}$ with Fixed Given $\alpha_{m}$, $\eta_{f,n}$, $\theta_{m}$ and $\zeta_{m}$}
\indent The power allocation sub-problem P3 can be expressed as:
\begin{equation}
\begin{aligned}
\label{42}
&\textbf{P3}:\quad \mathop{\max}
\limits_{\bm{P}}\mathbb{J}(\bm{P_m})\\
& \begin{array}{r@{\quad}l@{}l@{\quad}l}
\rm{s.t.} & 30(i), 30(j), 30(m), 30(n).
\end{array}
\end{aligned}
\end{equation}
\par According to Shannon Bound, we simplify $\overline{R}_{\rm m}$ due to the complexity and difficulty of the computation. 
\begin{theorem}
$\overline{R} _{\rm m}$ is expressed as:
\begin{align}
\overline{R} _{\rm m}=\frac{1.44P_{\rm m}\int_{0}^{t_{\rm m}^{\rm stay}} \sum_{f\in F} w_{\rm f,k} \eta _{\rm f,n}\frac{\left | h_{\rm n,k}  \right | ^{\rm 2}d_{\rm nR,m}^{- \rm \rho ^{'} } }{n_0}dt }{t_{\rm m}^{\rm stay} } .\label{43}
\end{align}
\end{theorem}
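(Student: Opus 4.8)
The plan is to obtain the simplified form (\ref{43}) directly from the instantaneous rate (\ref{10}), the time-average (\ref{11}), and the NOMA SINR (\ref{8}), with the logarithm linearized through the Shannon bound. Substituting (\ref{10}) into the integrand of (\ref{11}), and identifying vehicle $m$ with its cluster-member pair $(n,k)$, gives the exact starting point
\begin{align}
\overline{R}_{\rm n,k}=\frac{1}{t_{\rm m}^{\rm stay}}\int_{0}^{t_{\rm m}^{\rm stay}}\sum_{f\in F}w_{\rm f,k}\,\eta_{\rm f,n}\log_{2}\!\left(1+SINR_{\rm n,k}\right)dt. \nonumber
\end{align}
The objective of the simplification is to render $\overline{R}_{\rm m}$ affine in the transmit power, so that the power-allocation subproblem \textbf{P3} stays compatible with the quadratic transform invoked in the following subsection.

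The first step is to apply the Shannon bound $\ln(1+x)\le x$, i.e. $\log_{2}(1+x)\le x/\ln 2\approx 1.44\,x$, to the per-subchannel term, which is tight precisely when the SINR is small. The second step is to simplify (\ref{8}) itself: treating the residual intra-cluster interference left after SIC as negligible against the thermal noise and writing $\sigma^{2}_{\rm n,k}\equiv n_0$, the SINR collapses to the SNR $P_{\rm n,k}|h_{\rm n,k}|^{2}d_{\rm nR,m}^{-\rho^{'}}/n_0$. Combining the two, $\log_{2}(1+SINR_{\rm n,k})\approx 1.44\,P_{\rm m}|h_{\rm n,k}|^{2}d_{\rm nR,m}^{-\rho^{'}}/n_0$ with $P_{\rm n,k}=P_{\rm m}$ the variable of \textbf{P3}; pulling the constant $1.44$ and $P_{\rm m}$ outside the time-average and the subchannel sum then reproduces (\ref{43}) term by term.

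The principal difficulty is not the algebra, which is routine, but justifying the two approximations. The linearization of the logarithm is faithful only in the low-SINR regime, so I would argue that the heavy path loss induced by the vehicle-to-RSU separation $d_{\rm nR,m}$ in (\ref{9}) keeps $SINR_{\rm n,k}\ll 1$ uniformly over $t\in[0,t_{\rm m}^{\rm stay}]$; the tightest point is the closest approach, where $d_{\rm nR,m}$ is minimal and the bound is loosest, so it suffices to control that instant. The step that discards the interference sum rests on the ideal-SIC modelling assumption tied to the decoding order $h_{\rm n,1}>h_{\rm n,2}>\cdots>h_{\rm n,k}$. Granting these two regime conditions, the claimed identity follows immediately from linearity of the integral, and the payoff is that $\overline{R}_{\rm m}$ becomes directly proportional to $P_{\rm m}$, which is exactly what the subsequent quadratic-transform treatment of \textbf{P3} requires.
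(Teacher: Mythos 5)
Your derivation lands on the same identity the paper's Appendix A uses, and the algebra is the same: replace $\log_{2}(1+x)$ by $1.44\,x$, drop the interference term from the denominator of (\ref{8}), and pull $P_{\rm m}$ and the constant $1.44$ out of the time integral and the subchannel sum. The only real difference is the framing of the linearization. The paper presents it as the \emph{infinite-bandwidth} limit of Shannon capacity: as $B\to\infty$ the per-Hz SNR $S/(n_{0}B)$ vanishes and $B\log_{2}\bigl(1+S/(n_{0}B)\bigr)\to \log_{2}e\cdot S/n_{0}=1.44\,S/n_{0}$. You instead invoke the bound $\log_{2}(1+x)\le 1.44\,x$ and argue the SINR stays small over $[0,t_{\rm m}^{\rm stay}]$ because of path loss. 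Both rest on the same limit $\log_{2}(1+x)/x\to\log_{2}e$ as $x\to 0$, so this is essentially the paper's route under a different physical reading of why the argument of the logarithm is small.

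There is, however, one step where your justification genuinely fails: attributing the disappearance of the interference sum to ideal SIC. In (\ref{8}) the sum $\sum_{k+1}^{K}\left|h_{\rm n,k}\right|^{2}P_{\rm n,k}d_{\rm nR,m}^{-\rho^{'}}$ is the interference from the \emph{weaker} users $k+1,\dots,K$, i.e., those decoded \emph{after} user $k$ under the ordering $h_{\rm n,1}>h_{\rm n,2}>\cdots>h_{\rm n,k}$. Ideal SIC removes only the already-decoded stronger users, and those are already absent from (\ref{8}); for every user except the last one decoded, this residual interference remains in the true SINR no matter how perfect the SIC is. The wideband framing disposes of it more coherently (if implicitly): as $B\to\infty$ the noise power $n_{0}B$ grows without bound while the co-channel interference power is fixed, so the interference-to-noise ratio vanishes and the limit $1.44\,S/n_{0}$ survives even with the interference kept in the pre-limit expression. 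If you want to retain your low-SINR framing, you need to state a separate assumption that the residual intra-cluster interference is negligible relative to noise (or absorbed into an effective $n_{0}$); as written, that step of your proof rests on a property SIC does not have.
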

\begin{proof}
Refer to Appendix A.$\hfill\blacksquare$ 
\end{proof}
\par An approximate solution with respect to $P_m$ is obtained. According to (\ref{29}), $\mathbb{J} =\frac{\sum_{m=1}^{M}C_{\rm m} }{\frac{\Phi_m}{P_m} +\psi_m } $ can be transformed by quadratic transformation as \cite{boyd2004convex}:
\begin{equation}
\begin{aligned}
\mathop{\max} \limits_{\bm{P}, {\bm x}}&\left[2x_m\sqrt{\sum_{m=1}^{M}C_m } -x_m^2 \sum_{m=1}^{M}\left ( \frac{\Phi_m}{P_m} +\psi_m \right ) \right],\label{44}
\end{aligned}
\end{equation}
where $x_m$ is introduced by the quadratic transform. And $\Phi_{\rm m}$ and $\psi_{\rm m}$ can be expressed as: 
\begin{align}
\Phi_{\rm m}=\frac{\theta _{\rm m}C_{\rm m}t_{\rm m}^{\rm stay}P_{\rm m}^{\rm H }}{1.44\int_{0}^{t_{\rm m}^{\rm stay}} \sum_{f\in F}w_{ f,k}\eta _{ f,n} \frac{\left | h_{ n,k} \right |^{2}d_{ nR,m}^{-\rm \rho ^{'} }}{n_ 0}dt }  ,\label{45}
\end{align}
\begin{equation}
\begin{aligned}
\psi_{\rm m}=E_{\rm m}^{\rm loc}+E_{\rm m}^{\rm H,sat}+E_{\rm m}^{\rm up2}+E_{\rm m}^{\rm BS}+E_{\rm m}^{\rm sat}   .\label{46}
\end{aligned}
\end{equation}
Similarly, we set $x_m^{\ast}$ and $P_m^{\ast}$ to be the expressions for the optimal solution of (\ref{42}). $x^{\ast}_{\rm m}$ is a function of $P_m^{\ast}$, which can be represented as:
\begin{align}
x^{\ast}_{\rm m}=\frac{\sqrt{\sum_{\rm m=1}^{\rm M} C_{\rm m}} }{\sum_{\rm m=1}^{M}\left ( \frac{\Phi_{\rm m}}{P_{\rm m}^{\ast}}+\psi_{\rm m} \right ) } .\label{47}
\end{align}
\par For fixed $x_m$, the objective function of problem (\ref{44}) is a concave function with respect to $P_m$, which can be solved by CVX. By alternately optimizing $x_m$ and $P_m$, the convergent optimization solution $P_m^{\ast}$ can be acquired.
\section{Many-to-one Matching for Sub-channel Allocation}
To solve the problem of sub-channel matching, we use the matching theory to obtain the optimal sub-channel matching $\eta_{f,n}$ by given the independent variables of power allocation $P_{m}$, bandwidth allocation $\alpha_{m}$, and task allocation $\theta_{m}$ and $\zeta_{m}$.
\indent The sub-channel matching sub-problem P4 can be expressed as:
\begin{equation}
\begin{aligned}
\label{48}
&\textbf{P4}:\quad \mathop{\max}
\limits_{\bm{\eta}}\mathbb{J}(\bm{\eta})\\
& \begin{array}{r@{\quad}l@{}l@{\quad}l}
\rm{s.t.} & 30(e), 30(g), 30(j), 30(m), 30(n).
\end{array}
\end{aligned}
\end{equation}
\indent In this matching problem, cluster and sub-channels are set as agents, denoted by sets $\mathcal{I} =\left \{ \bm{I}_{1}, \bm{I}_{2}, \cdots, \bm{I}_{\rm N} \right \}$  and $\mathcal{F} =\left \{ F_{1}, F_{2}, \cdots, F_{\rm F} \right \}$, respectively, which are selfish and rational. The NOMA is used in the communication process of vehicle to RSU. In this process $M$ vehicles are matched with $F$ sub-channels. The problem of sub-channel matching is solved using a many-to-one matching algorithm \cite{8972932} \cite{9272879} \cite{8014491} . In order to  utilize the sub-channel resources vividly and obtain optimal computation efficiency, the sub-channels ${\rm SC}_{f}$ and vehicles $M_{m}$ are paired with each other to form stable matching pairs. During the matching process, the players have different selection order for another set of players, we denote the sequence of player's preferences by $P=\left \{ P\left ( \bm{I}_{1}  \right ) , \cdots , P\left ( \bm{I}_{N}  \right ), P\left ( SC_{1}  \right ) , \cdots,\right.
\left.P\left ( SC_{F}  \right )\right \}$.
\begin{definition}
Given two disjoint sets $\mathcal{I}$ and $\mathcal{F}$, 
\end{definition}
\begin{enumerate}
\item[1).] $\mu \left ( \bm{I}_{n}  \right ) \subseteq \mathcal{F}$, $\mu \left ( SC_{f}  \right ) \subseteq \mathcal{I}$;
\item[2).] $\left | \mu \left ( SC_{f}  \right )  \right |=1 $;
\item[3).] $\left | \mu \left ( \bm{I}_{n}  \right )  \right | \le q_{\max} $
\item[4).] $\bm{I}_{n} \subseteq \mu \left ( SC _{f} \right ) \Leftrightarrow SC_{f} \subseteq \mu \left ( \bm{I}_{n} \right ) $;
\item[5).] $\left (\mathrm{transitive}\right )$ if $B\succeq_{I_{n}}B^{'}$ and $B^{'}\succ_{I_{n}}B^{''}$, then $B\succeq_{I_{n}}B^{''}$. 
\end{enumerate}
\par For each vehicle, the computation efficiency in the offloading process is the optimization objective, so the computation efficiency of each cluster $I_{n}$ on the sub-channel $SC_{f}$ set to the utility function can be expressed as:
\begin{align}
U_{\rm n} \left ( f \right ) =\mathbb{J}_{\rm m\in M_{n} }.\label{49}
\end{align}
\begin{definition}
The $M$ vehicles are devided into $N$ clusters and each cluster can occupy one or more sub-channels. The preference relation can be expressed as:
\end{definition}
\begin{align}
{SC}_{\rm f}\succ_{\bm{I}_n}{SC}_{f\prime}\Leftrightarrow U_{\rm n} \left ( f \right )> U_{\rm n} \left ( f ^{'}\right ).\label{50}
\end{align}
\begin{definition}
Each sub-channel is allowed to be occupied by only one cluster and there are $K$ vehicles per cluster. The value of $K$ is taken between $h_{1}$ and $h_{2}$. The preference relation can be expressed as:
\end{definition}
\begin{align}
\bm{I}_{\rm k}\succ_{{SC}_{\rm f}}\bm{I}_{k\prime}\Leftrightarrow U_{\rm f} \left ( n \right )> U_{\rm f} \left ( n ^{'}\right ).\label{51}
\end{align}
\indent With different preference settings for the two sets, we can solve the optimization problem of sub-channels matching by using them in a many-to-one two-sided matching theory to make the objective function optimal.
\begin{definition}
Given two disjoint sets $\mathcal{I}$ and $\mathcal{F}$, a matching $\mu$, a pair $\left(I_{\rm n},{\rm SC}_{\rm f}\right)$ is a blocking pair which is $I_{\rm n}\not\in \mu \left ( SC_{\rm f} \right )$ and $SC_{rm f} \not\in \mu \left ( I_{\rm n} \right ) $. It must satisfy
\end{definition}
\begin{enumerate}
\item[1).] $A\succ_{SC_{f} } \mu \left ( SC_{\rm f}  \right )$, $A\subseteq \left \{ I_{\rm n}  \right \} \cup \left \{  SC_{\rm f} \right \} $ and $I_{n} \in A$.
\item[2).] $SC_{\rm f} \succ_{I_{n} } SC_{\rm l}, SC_{\rm l} \in \mu \left ( I_{n} \right ) $.
\end{enumerate}
\begin{definition}
There are no blocking pairs in the matching $\mu$ and this matching $\mu$ is stable.
\end{definition}
\begin{lemma}
If the matching process converges to a match $\mu^{*}$ in VSMA, then $\mu^{*}$ is a stable match.
\end{lemma}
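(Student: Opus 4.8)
\emph{Proof sketch.} The plan is to argue by contradiction, combining the blocking-pair conditions of Definition~4 with the proposal dynamics of the matching procedure VSMA. Suppose the process converges to $\mu^{*}$ but that $\mu^{*}$ is \emph{not} stable. By Definition~5 there must then exist a blocking pair $\left(I_{n},SC_{f}\right)$ with $I_{n}\not\in\mu^{*}\left(SC_{f}\right)$ and $SC_{f}\not\in\mu^{*}\left(I_{n}\right)$. Unpacking Definition~4, this means two things hold simultaneously: the sub-channel strictly prefers an assignment containing $I_{n}$ to its converged cluster, i.e. $A\succ_{SC_{f}}\mu^{*}\left(SC_{f}\right)$ with $I_{n}\in A$; and the cluster strictly prefers $SC_{f}$ to at least one sub-channel it currently holds, i.e. $SC_{f}\succ_{I_{n}}SC_{l}$ for some $SC_{l}\in\mu^{*}\left(I_{n}\right)$. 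My goal is to show these two facts cannot coexist once the algorithm has terminated.

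First I would trace the history of cluster $I_{n}$ during the run of VSMA. Since clusters pursue sub-channels in decreasing order of the utility $U_{n}\left(f\right)$ defined in \eqref{49}, and since $SC_{f}\succ_{I_{n}}SC_{l}$, the cluster $I_{n}$ must have applied for $SC_{f}$ at an iteration no later than the one at which it secured the strictly less preferred $SC_{l}$. Because $I_{n}\not\in\mu^{*}\left(SC_{f}\right)$ at convergence, this application was ultimately unsuccessful, so at some step $SC_{f}$ either rejected $I_{n}$ outright or displaced it after a tentative acceptance.

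Next I would invoke the monotonicity of the receiving (sub-channel) side. Under the acceptance-and-displacement rule of VSMA, $SC_{f}$ releases or refuses a cluster only when it already holds, or is simultaneously offered, a cluster that is strictly more preferred under $\succ_{SC_{f}}$; and once it upgrades, the quality of its tentatively held cluster never decreases in later iterations. Chaining these upgrades through the transitivity of the preference relation (item~5 of Definition~1) gives $\mu^{*}\left(SC_{f}\right)\succeq_{SC_{f}}I_{n}$. This contradicts $A\succ_{SC_{f}}\mu^{*}\left(SC_{f}\right)$ with $I_{n}\in A$, since it would force $SC_{f}$ to prefer $I_{n}$ over a strictly better final partner. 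The contradiction shows that no blocking pair can survive convergence, so $\mu^{*}$ is stable.

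The main obstacle I anticipate is formalizing the monotonicity / finality-of-rejection property precisely, since it is a feature of the specific update rule of VSMA rather than of the abstract preference structure. Concretely, I would need to establish two invariants of the algorithm: that each sub-channel's tentatively assigned cluster is non-decreasing in $\succ_{SC_{f}}$ across iterations, and that a cluster never re-applies to a sub-channel that has already rejected it. Once these invariants are proven, the contradiction above follows routinely, and the care lies only in handling the many-to-one capacity bound $\left|\mu^{*}\left(I_{n}\right)\right|\le q_{\max}$ consistently when $I_{n}$ holds several sub-channels.
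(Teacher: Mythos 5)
Your proposal is correct and follows essentially the same route as the paper's own proof: assume a blocking pair $\left(I_{n},SC_{f}\right)$ survives at convergence and derive a contradiction from VSMA's proposal/replacement rule. If anything, your version is tighter than the paper's --- the paper merely restates the algorithm's accept/replace logic and then asserts that no blocking pair can exist, whereas your history-tracing argument (proposals made in decreasing preference order, monotone improvement of each sub-channel's tentative match, and transitivity of $\succeq$) supplies exactly the invariants needed to make that assertion rigorous.
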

\begin{proof}
Refer to Appendix B.$\hfill\blacksquare$ 
\end{proof}
\begin{theorem}
In VSMA, each cluster applys for matching, and the optimal matching result is reached after finite iterations by the matching $\mu^{\ast}$.
\end{theorem}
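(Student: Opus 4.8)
The plan is to split the statement into two assertions---that VSMA halts after finitely many iterations, and that the matching $\mu^{\ast}$ it returns is cluster-optimal among all stable matchings---and to prove them in that order. Finite convergence follows from a counting argument. In VSMA each cluster $\bm{I}_n$ proposes to sub-channels in decreasing order of its preference $\succ_{\bm{I}_n}$ determined by the utility $U_n(f)=\mathbb{J}$, and the decisive feature of the rule is that once a cluster is rejected by a sub-channel ${\rm SC}_f$ it is permanently removed from that cluster's proposal list and never re-proposes. Since there are only $F$ sub-channels, each of the $N$ clusters can be rejected at most $F$ times, so the total number of proposals over the entire run is at most $N\cdot F$. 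As every non-terminal iteration processes at least one new proposal, the algorithm must terminate within $N\cdot F$ iterations, which yields the finite-convergence claim and makes Lemma~1 applicable, so that $\mu^{\ast}$ is stable.

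For optimality I would upgrade stability to cluster-optimality by proving the invariant that \emph{no cluster is ever rejected by a sub-channel that is achievable for it}, where ${\rm SC}_f$ is called achievable for $\bm{I}_n$ if some stable matching assigns ${\rm SC}_f$ to $\bm{I}_n$. I would argue this by induction on the iteration index. Assuming the invariant up to a given step, suppose that at the next step ${\rm SC}_f$ rejects $\bm{I}_n$ in favour of a competitor $\bm{I}_{n'}$ with $\bm{I}_{n'}\succ_{{\rm SC}_f}\bm{I}_n$, and suppose for contradiction that some stable matching $\nu$ assigns ${\rm SC}_f$ to $\bm{I}_n$. Since $\bm{I}_{n'}$ currently holds ${\rm SC}_f$, the inductive hypothesis forces every sub-channel ranked by $\bm{I}_{n'}$ above ${\rm SC}_f$ that has already rejected it to be not achievable for $\bm{I}_{n'}$, hence absent from $\nu(\bm{I}_{n'})$; a short case analysis on the quota then shows that in $\nu$ the cluster $\bm{I}_{n'}$ either leaves a slot free or holds some sub-channel it likes no more than ${\rm SC}_f$. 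Combined with the fact that ${\rm SC}_f$ prefers $\bm{I}_{n'}$ to its $\nu$-partner, this exhibits $(\bm{I}_{n'},{\rm SC}_f)$ as a blocking pair for $\nu$, contradicting its stability. Hence the invariant holds throughout, and because clusters descend their preference lists, each cluster ends matched to its most preferred achievable sub-channels---precisely the cluster-optimal stable matching, reached in finitely many steps.

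The hard part will be executing the inductive step rigorously under the many-to-one structure rather than in the classical one-to-one Gale--Shapley setting: since a cluster may simultaneously hold up to $q_{\max}$ sub-channels while each sub-channel admits exactly one cluster, the competitor's assignment in $\nu$ must be controlled at the level of the whole set $\mu(\bm{I}_{n'})$ and the induced preference over sets, and one must verify the blocking-pair condition even when $\bm{I}_{n'}$ is not saturated to its quota $q_{\max}$. A secondary subtlety is ensuring that the preferences $\succ_{\bm{I}_n}$ and $\succ_{{\rm SC}_f}$ are strict, since ties in the efficiency values $U_n(f)$ would undermine the clean ``most preferred achievable'' conclusion; I would either invoke the generic distinctness of the computation-efficiency values or impose a fixed tie-breaking rule to guarantee strictness.
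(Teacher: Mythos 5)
Your proposal is correct in substance but proves more than the paper does, and the two halves should be judged separately. The termination half is essentially the paper's own argument made sharper: Appendix C likewise relies on the fact that a sub-channel, once it rejects a cluster, is deleted from that cluster's preference list and never receives a second proposal, together with the finiteness of both agent sets; your explicit $N\cdot F$ bound on the number of proposals is cleaner than the paper's vaguer claim that the number of match attempts ``will not exceed the total number of sub-channels,'' and you do not need the monotone-improvement observation $U_{n}\left ( f \right ) -U_{n}\left ( f-1 \right ) > 0$ that the paper invokes for swaps. The optimality half is where you genuinely diverge: the paper never establishes optimality among stable matchings --- its ``optimal'' amounts to termination plus the stability already given by Lemma 1, with the objective improving monotonically along swaps --- whereas you attempt the classical Gale--Shapley/Roth proposer-optimality theorem (no cluster is ever rejected by an achievable sub-channel), which, if carried through, yields the strictly stronger conclusion that $\mu^{\ast}$ is cluster-optimal among \emph{all} stable matchings. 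That route is legitimate and more informative, but it needs hypotheses the paper never supplies: strictness of the efficiency-induced preferences (which you flag), responsiveness of cluster preferences over \emph{sets} of sub-channels so that the many-to-one problem reduces cleanly to the one-to-one case, and re-queuing of displaced clusters in VSMA --- the pseudocode is ambiguous on this last point, and without it even stability, let alone optimality, can fail. Moreover, since VSMA selects proposers randomly, your conclusion implicitly requires the order-independence of deferred-acceptance outcomes for ``the'' cluster-optimal matching to be a well-defined output of the algorithm. In short, your plan buys a stronger theorem at the cost of machinery and assumptions the paper avoids; the paper's proof establishes only stability plus finite termination, which is all its own algorithm description can support.
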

\begin{proof}
Refer to Appendix C.$\hfill\blacksquare$ 
\end{proof}

\begin{algorithm}[h]
  \caption{Vehicles and Sub-channels Matching Algorithm(VSMA)} 
  \begin{algorithmic}[1]
    \Require\\
      $\bm{M}$: initial a matching list, which is empty;\\
      $\bm{SC_f}$ and $\bm{I_n}$: preference lists for sub-channels and clusters;\\
      nonempty: a row in the $\bm{I_n}$ preference lists;
    \Ensure
       \While {nonempty isn't empty}
       \State randomly selected to be placed in $nonempty$;
         \For{$n = 1$; $n<N$; $n++$ }
      \State  randomly assign the first $SC_{\rm f}$ in the preference list to $I_n$ ;
      \If {SC isn't matched }
      \State add a new entry $\left ( I_{\rm n},SC_{\rm f}\right ) $ to the list M;
      \State add one to the number of SC allocation to the corresponding cluster $I_n$;
      \State update nonempty;
       \State \textbf{break}
        \ElsIf {SC has been matched} 
        \State compare $m_{\rm prime}^{\rm rank}$ and $m^{\rm rank}$ in 
 SC preference lists
      \If {$m_{\rm prime}^{\rm rank}< m^{\rm rank}$}
      \State delete the SC from the I's preference list;
      \State \textbf{break}
       \ElsIf {$m_{\rm prime}^{\rm rank}> m^{\rm rank}$} 
        \State update M;
        \State delete SC from I's preference list.
\Else 
        \State \textbf{continue}
      \EndIf
      \EndIf
     \EndFor
    \EndWhile
  \end{algorithmic}
\end{algorithm}

\section{Alternate Iteration Algorithm for Joint Optimization}
\par Based on the optimization results in the above section, we propose an alternate optimization algorithm 2 to obtain the optimized solution of the original problem $P0$. In algorithm 2, the task allocation are first solved by given initial values. Then, the power allocation are solved by fixed other variables. The sub-channel matching and the bandwidth allocation are solved in the same way. By giving a limit to the optimization accuracy, iterative optimization is carried out to obtain the optimized computation efficiency of the system.
\par We consider setting $k$ as the iteration number, which represents the number of loops in Algorithm 2. The complexity of optimizing $\theta_{m}$ and $\zeta_{m}$ is $\mathcal{O}\left(1\right )$. It is a linear programming problem of a convex function. The optimization of $P_{m}$ and bandwidth allocation $\alpha_{m}$ adopts the method of quadratic transformation and conducts $M^2$ and $N^2$ iterations, respectively, with the computational complexity being $\mathcal{O}\left(M^2\right )$ and $\mathcal{O}\left(N^2\right )$. Algorithm 1 is used to optimize $\eta_{f,n}$, and its complexity is  $\mathcal{O}\left(G\right )$. Therefore, the total complexity of Algorithm 2 is $k\times \mathcal{O}\left(M^2+N^2\right )$.

\begin{algorithm}[h]
  \caption{Alternating Optimization Algorithm} 
  \begin{algorithmic}[1]
    \Require\\
    $\bm{\alpha}$, $\bm{P}$ and $\bm{\eta _{f,n}}$ to a feasible value;\\
    $k$: initial iteration number;
    \Ensure
       \While { $\left ( \mathbb{E}^{(\rm k)} -\mathbb{E}^{(\rm k-1)}\right)/\mathbb{E} ^{(\rm k)}>10^{\rm -5}$ \textbf{or}  $k\le50$} 
       \State Compute the optimal $\zeta _m^{\left ( \rm k \right )}$ and $\theta _m^{\left ( \rm k \right )}$ under given $ \alpha_m^{\left ( \rm k \right )}$,$P _m^{\left ( \rm k \right )}$ and $\eta _{f,n}^{\left ( \rm k \right )}$ by (\ref{32}) and (\ref{33});
       \State Compute the optimal $P_m^{\left ( \rm k \right )}$ under given $ \zeta_m^{\left ( \rm k \right )}$, $\theta _m^{\left ( \rm k \right )}$, $\alpha _m^{\left ( \rm k \right )}$ and $\eta _{f,n}^{\left ( \rm k \right )}$ by (\ref{44});
       \State Compute the optimal $\eta _{f,n}^{\left (\rm k \right )}$ under given $ \zeta_m^{\left ( \rm k \right )}$,$P _m^{\left ( \rm k \right )}$, $\alpha _m^{\left ( \rm k \right )}$ and $\eta _{f,n}^{\left ( \rm k \right )}$ by Algorithm 1;
       \State Compute the optimal $\alpha _m^{\left ( \rm k \right )}$  under given $ \theta _m^{\left ( \rm k \right )}$,$P _m^{\left ( \rm k \right )}$, $\zeta _m^{\left ( \rm k \right )}$ and $\eta _{f,n}^{\left ( \rm k \right )}$ by (\ref{38}) ;
       \State  $k=k+1$;
    \EndWhile
  \end{algorithmic}
\end{algorithm}

\section{Simulation Results}
In this section, we investigate the performance of the optimization scheme proposed in this paper by simulation. To ensure the accuracy and continuity of vehicular task offloading in STIN, we assume that each vehicle offloads its computational task to a single RSU located within a radius of $R=250$ m. This assumption is made to maintain reliable and efficient task offloading, as vehicles are expected to complete their computational tasks within the coverage area of one RSU. The radius of 250 meters is chosen to balance communication reliability and computation efficiency, ensuring that the vehicle can effectively offload its tasks without significant latency or disruption caused by frequent handovers between multiple RSUs. The total THz bandwidth for the transmission from vehicles to satellite is $B_s$=100 GHz \cite{10278635} \cite{9370130}.
\begin{table}[htbp]
\centering
\caption{Simulation parameters}
\begin{tabular}{ccc}
\hline\hline
Parameters                             &Symbol
&Values   \\
\hline
Size of task &$C_{\rm m}$ &$[3000-4200]\ kb$       \\
Maximum transmission power                  &$P^{\max}$   & 23 dBm           \\
Noise power                               &$\sigma ^{\rm 2}$    & -80 dBm          \\
RSU communication radius&$r_{\rm m}$ &250 m\\
Vehicles computational capabilities              &$Z_{\rm m} $               & 0.5 G cycles/s          \\
Path loss index                    &$\rho $          & 3.7          \\
\hline\hline
\end{tabular}
\label{tab:jianxie}
\end{table}
\par In Table I, we summarize some of the parameters involved in the simulation. The computation power of each BS, RSU and satellite can meet the demand of vehicle task transmission and computation. During the simulation, we compare the proposed jointly task offloading and resource allocation (JTORA) with different allocation methods.
\begin{itemize}
\item \textbf{Priority Locality}\\
The priority locality algorithm aims to allocate vehicle tasks to local vehicle for computation whenever possible under allowable conditions. For some simple tasks related to environmental perception data processing in vehicle autonomous driving, the computing units within the vehicle itself may be able to complete them quickly, and there is no need to transmit the data to edge devices for processing. By allocating tasks to local computing as much as possible, the time and  transmission consumption for transmitting data to other edge devices are reduced, the computational delay is effectively decreased, and thus the overall computation efficiency is improved. 
\item \textbf{Priority Edge}\\
The priority edge algorithm mainly assigns priority to edge devices for computing during the task allocation process. In the situation where vehicles generate a large amount of task and the computing capability of local devices is limited, the necessary task will be transmitted to edge devices for computing. In this way, the network bandwidth can be utilized more reasonably and the communication efficiency of the entire system can be improved. 
\item \textbf{Random Allocation}\\
The random allocation algorithm is a relatively simple and straightforward task allocation method. When dealing with the allocation of vehicle tasks to different computing devices, this algorithm does not consider complex factors such as the characteristics of tasks and the states of computing resources. Instead, it randomly assigns tasks to available computing devices. The greatest advantage of this algorithm lies in its simplicity. It does not require complex processes such as computing resource evaluation and task characteristic analysis, and is relatively easy to implement. Since it assigns tasks randomly and does not consider the adaptability between tasks and computing devices, it may lead to some unreasonable allocation situations. 
\item \textbf{One-to-One}\\
The one-to-one algorithm is a sub-channel matching algorithm in NOMA. It assigns one sub-channel to each vehicle, ensuring a one-to-one correspondence between sub-channels and vehicles. It is relatively simple to implement. The system does not require complex scheduling for multi-vehicle shared channels, thus reducing the complexity and computational overhead of the system. However, when the demands of users or the states of channels change, due to the fixed one-to-one matching, it is rather difficult to quickly and flexibly adjust the allocation of sub-channels. 
\item \textbf{Water Filling}\\
Water filling is an algorithm used for power allocation. More power is allocated to good channels, while less power is allocated to poor channels. By reasonably allocating power according to the channel state, the water filling algorithm can effectively improve the spectral efficiency. 
\item \textbf{Average Allocation}\\
The average allocation algorithm evenly distributes the total power among various vehicles. It is simple to implement, yet it fails to take into account the differences among vehicle, resulting in tasks with large resource requirements being unable to obtain sufficient resources and thus affecting the computation efficiency of the system. 
\end{itemize}

\subsection{Convergence of the Proposed Algorithm}
\begin{figure}[htbp]
\centerline{\includegraphics[scale=0.33]{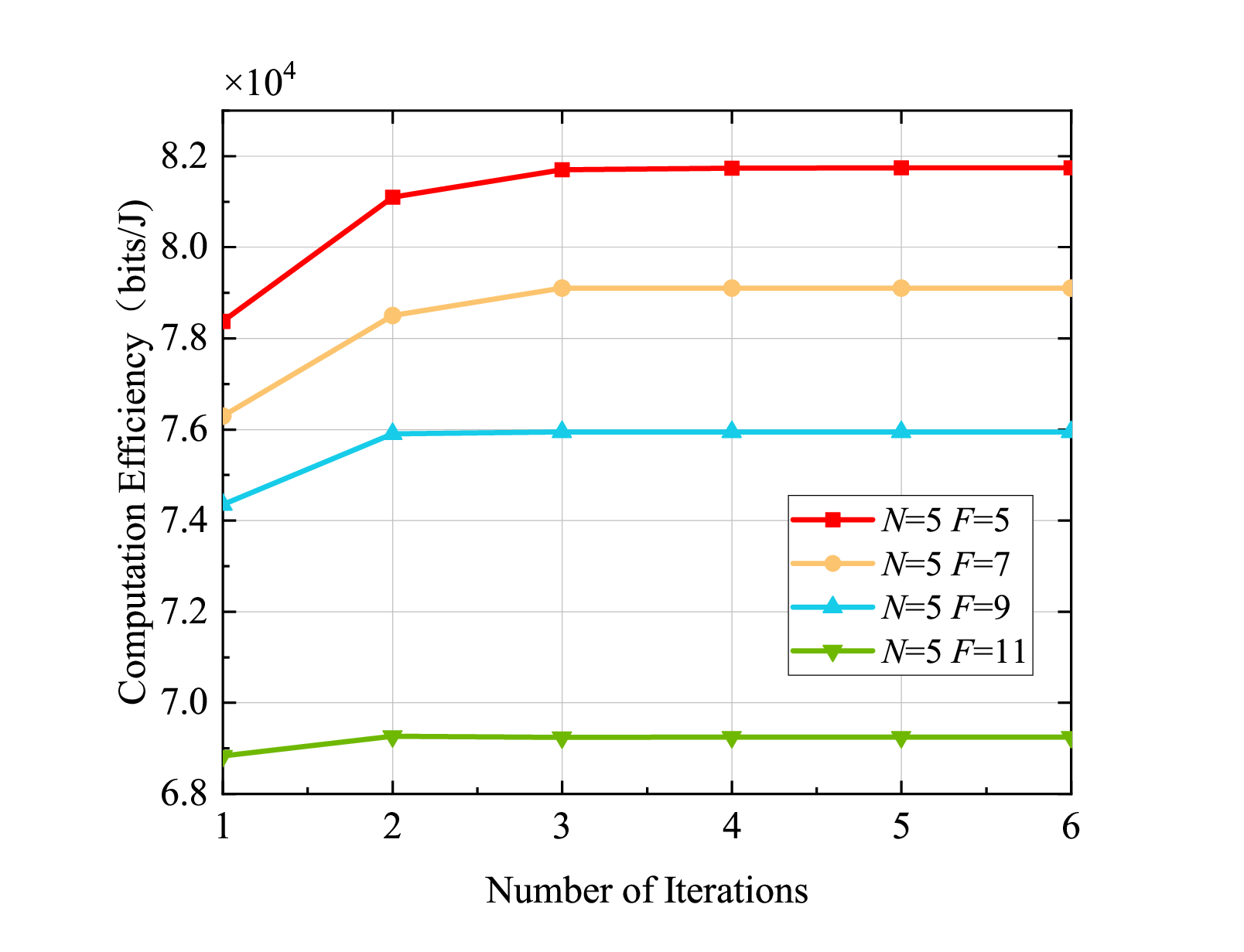}}
\caption{Iterations of VSMA under different sub-channels.}
\label{fig2}
\end{figure}
Fig. \ref{fig2} shows the comparison of the convergence performance of the proposed optimization algorithm under different numbers of sub-channels in the case of five vehicles in each NOMA group. It can be observed that as the number of sub-channels increases, the number of iterations remains almost the same, but the curve becomes flatter and the convergence speed is faster. With the number of vehicles fixed, more sub-channels can alleviate the competition for resources among vehicles. Each vehicle can be allocated relatively more appropriate sub-channels, reducing the frequent adjustments and iterations caused by resource competition, so that the matching process can be stabilized more quickly and convergence can be achieved.
\subsection{Performance Analysis}
\begin{figure}[htbp]
\centerline{\includegraphics[scale=0.33]{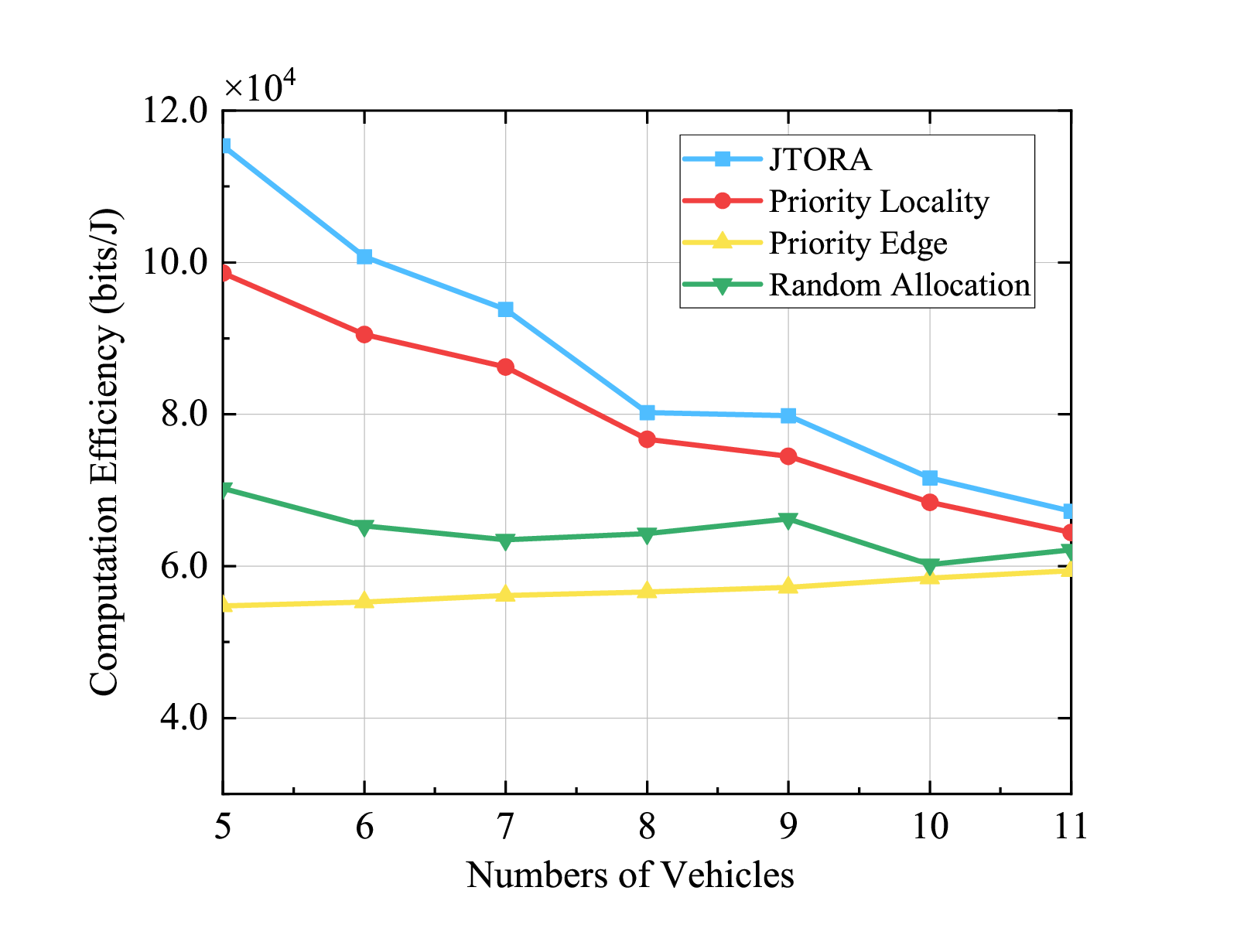}}
\caption{Computation efficiency vs the number of vehicles.}
\label{taskallocation vs. vehcle number}
\end{figure}
In Fig. \ref{taskallocation vs. vehcle number}, we compare the computation efficiency when the number of vehicles increases under different schemes of task allocation. It can clearly demonstrate that the computation efficiency decreases under different schemes as the number of vehicles increases. However, under the condition of prioritizing edge device computing, the computation efficiency is increasing. When the number of sub-channels remains fixed, with the increase of the number of vehicles, the sub-channel resources that can be allocated to each vehicle will decrease, resulting in a decrease in the rate of task data transmission and an increase in energy consumption.
Meanwhile, it can be observed that the proposed scheme performs better than prioritizing local computing. This is because local devices usually have relatively poor computing capabilities and will encounter numerous limitations when dealing with complex tasks. They are prone to performance bottlenecks when facing large-scale computing tasks, thereby resulting in additional time delays and energy consumption.
Moreover, the proposed scheme also has a better performance compared to random task allocation. In the proposed scheme, by means of multi-tier distributed computation, tasks are allocated to local devices, the BS and the LEO satellite, so that energy consumption can be reasonably distributed and unnecessary energy consumption can be reduced within a limited time.
In the computing prioritizing edge devices, as the number of vehicles increases, the task sizes of the system also increases. By allocating more tasks to BS with stronger computing capabilities, they can handle a large amount of data and complex computing tasks, thus effectively improving the computation efficiency.
\begin{figure}[htbp]
\centerline{\includegraphics[scale=0.33]{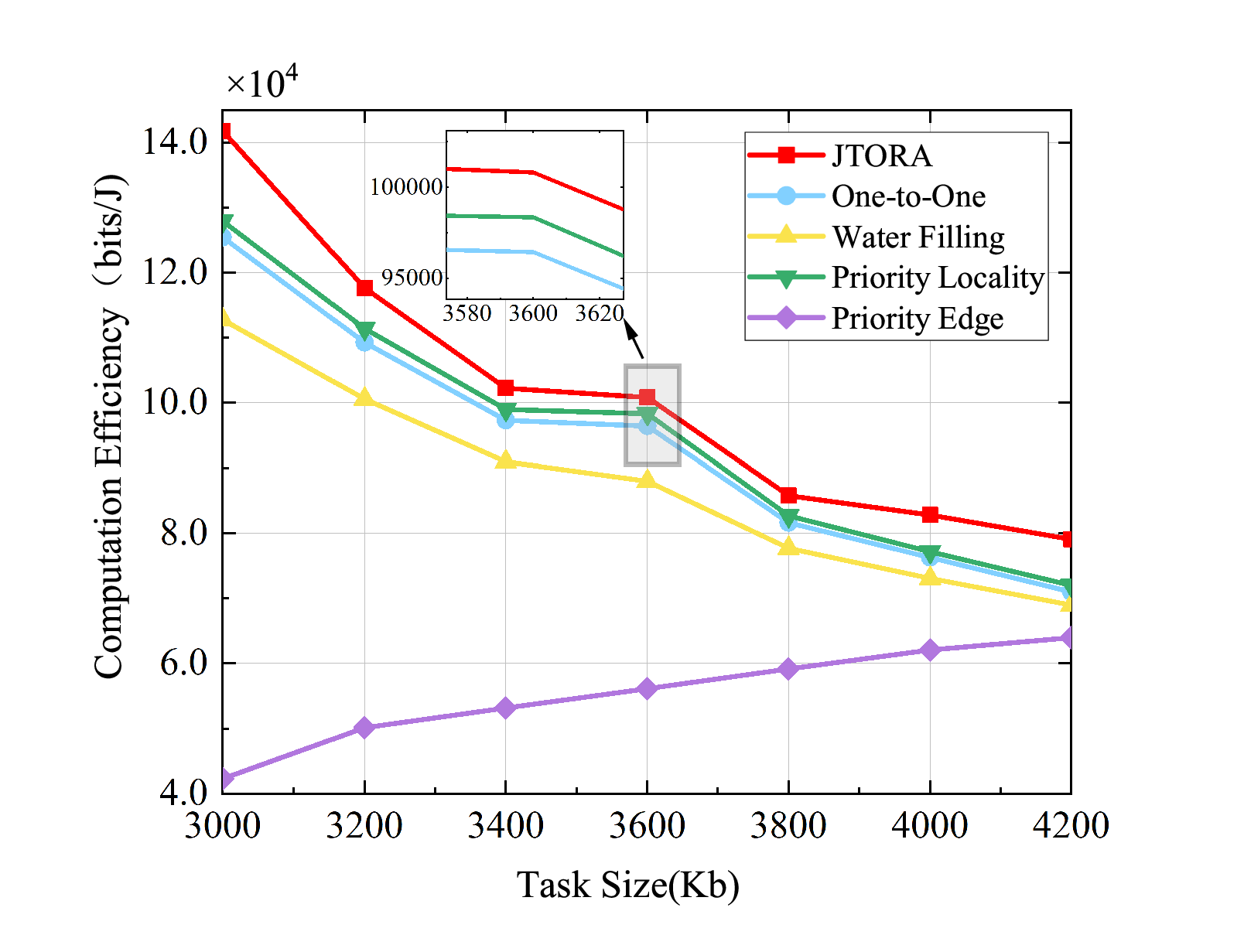}}
\caption{Computation efficiency versus task size.}
\label{task_size}
\end{figure}
\par In Fig. \ref{task_size}, we compare the computation efficiency when the task sizes increases under different schemes. It can be clearly observed that the computation efficiency decreases with the task sizes of vehicle increase under either of the schemes. But the computation efficiency is increasing under prioritizing edge device computing. The proposed optimization algorithm performs better than the one-to-one algorithm. This is because the many-to-one of the JTORA can dynamically allocate sub-channel resources according to the actual task transmission needs of vehicles. In contrast, in the one-to-one, the limited bandwidth restricts the transmission rate, increases energy consumption, and thus leads to a decline in computation efficiency. It may also cause sub-channels to be idle when some vehicles have no task to transmit, resulting in a waste of resources.\\
\par Meanwhile, the proposed optimization scheme also outperforms the performance of prioritizing local computing. The computing capacity of local cannot handle the increasingly large amount of task. The performance of water-filling algorithm for power allocation is also inferior to our proposed optimization scheme. Vehicles who have good channel quality may be allocated more power, while those located in the edge areas with poor channel quality may only be allocated very little power, or even cannot meet the basic communication requirements. Such unfair power allocation may lead to a serious decline in the communication quality of edge vehicles, such as an excessively low data transmission rate, and further increase energy consumption, reducing computation efficiency. For the computation that prioritizes edge devices, additional energy consumption is generated when tasks are transmitted to BS and the LEO satellite with greater computing capacity, thereby reducing the computation efficiency. 
\begin{figure}[htbp]
\centerline{\includegraphics[scale=0.33]{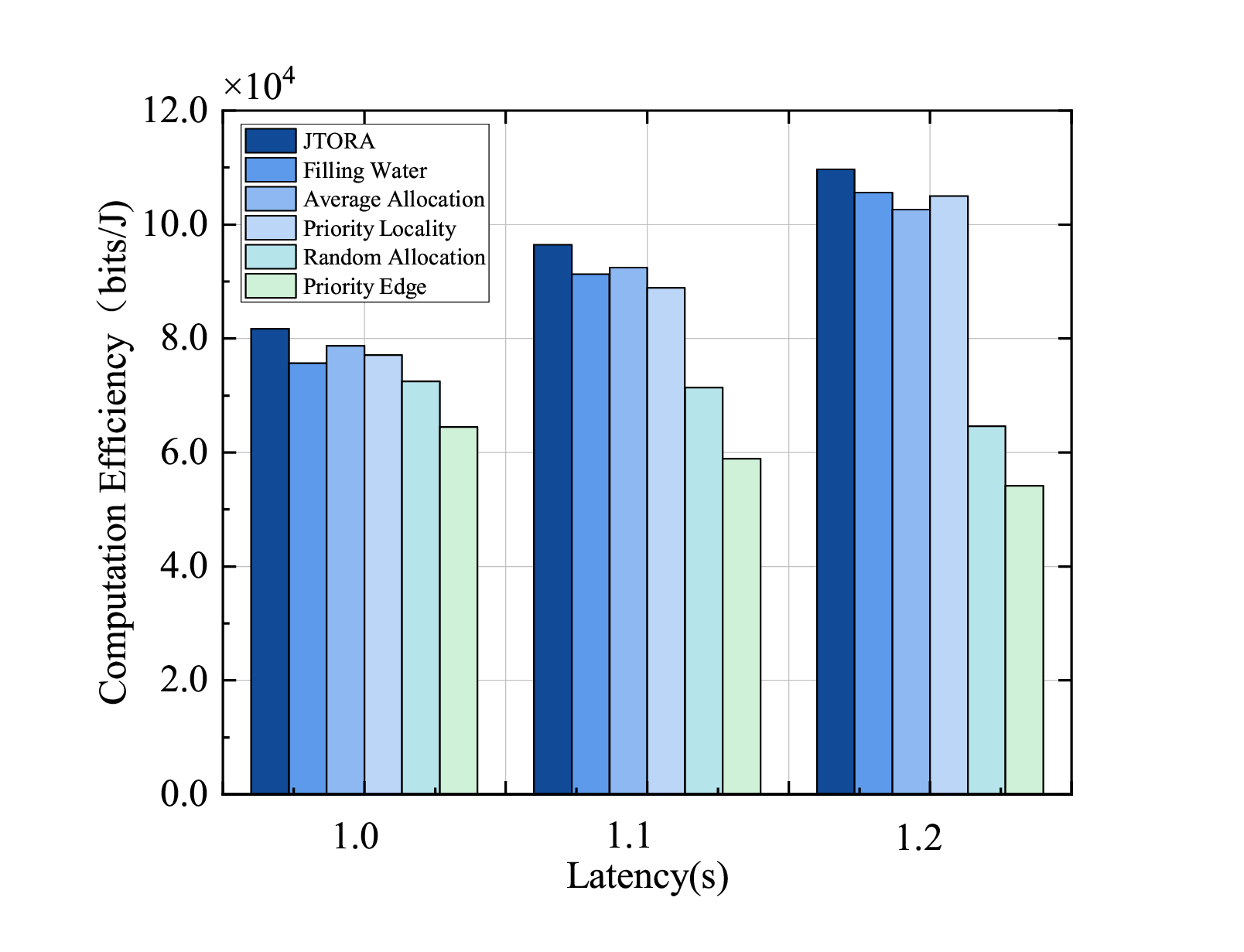}}
\caption{Computation efficiency versus maximum tolerable delay.}
\label{time}
\end{figure}
\par In Fig. \ref{time}, we compare the computation efficiency when the the maximum tolerable delay increases under different schemes. With the maximum tolerable delay increases, the computation efficiency increases. However, in the scheme of prioritizing edge device computing, the computation efficiency decreases. This is because as the maximum allowable delay increases, vehicles can choose to transmit more part of tasks to edge devices for computing within a defined range. JTORA improves the computation efficiency of the system by reasonably allocating tasks to local devices and edge devices. With the maximum tolerable delay increases, the computation efficiency increases. However, in the scheme of prioritizing edge device computing, the computation efficiency is decreasing. Because as the maximum tolerable delay increases, vehicles can choose to transmit tasks to edge devices for computing. JTORA improves the computation efficiency of the system by reasonably allocating tasks to local devices and edge devices. The unreasonable power allocations of the water-filling algorithm and the equal allocation of power have a great impact on the system energy consumption. If the equal power allocation is adopted, since the differences in vehicles' channel conditions are not taken into account, the system will be unable to fully utilize the channel advantages of vehicles with good channel quality to enhance the system capacity. When the equal power is allocated to vehicles with high channel gains and those with low channel gains, under the same bandwidth resources, vehicles with high channel gains could have transmitted more data by being allocated more power, thus improving the overall data transmission amount of the system, while the equal power allocation fails to achieve this. The equal power allocation does not allocate power reasonably according to vehicles' channel conditions and data transmission requirements, which may cause some vehicles to consume excessive energy in order to complete data transmission. As the maximum tolerable delay increases, more priority can be given to offloading tasks to edge devices for computing, which may lead to excessive transmission to BS and the LEO satellite for computing. This process affects energy consumption and reduces computation efficiency. 
\begin{figure}[htbp]
\centerline{\includegraphics[scale=0.33]{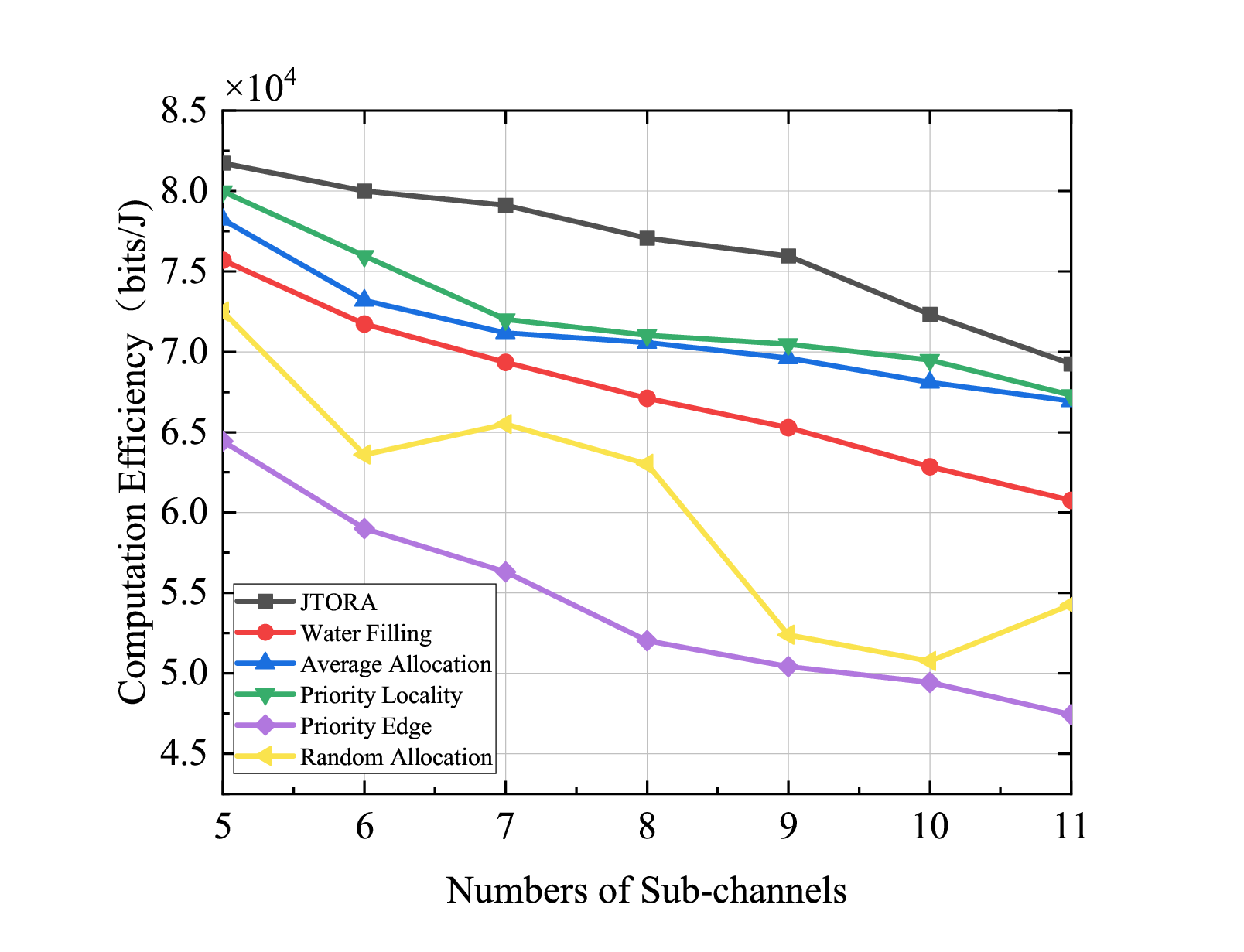}}
\caption{Computation efficiency versus the number of sub-channels.}
\label{subchannel}
\end{figure}
\par In Fig. \ref{subchannel}, we compare the computation efficiency when the the sub-channels increases under different schemes. With the number of sub-channels increases, the computation efficiency decreases. With the total bandwidth fixed in NOMA, as the number of sub-channels increases, the intra-cluster interference increases, and the transmission rate decreases accordingly. As a result, the system energy consumption increases and the computation efficiency decreases. Compared with the water-filling algorithm and the equal power allocation, JTORA can allocate more power to vehicles with better channel conditions according to the actual channel situations of vehicles, enabling them to complete data transmission with lower transmit power, thus reducing the energy consumption of the whole system and improving computation efficiency. However, the equal power allocation cannot achieve such an optimization effect. 

\begin{figure}[htbp]
\centerline{\includegraphics[scale=0.33]{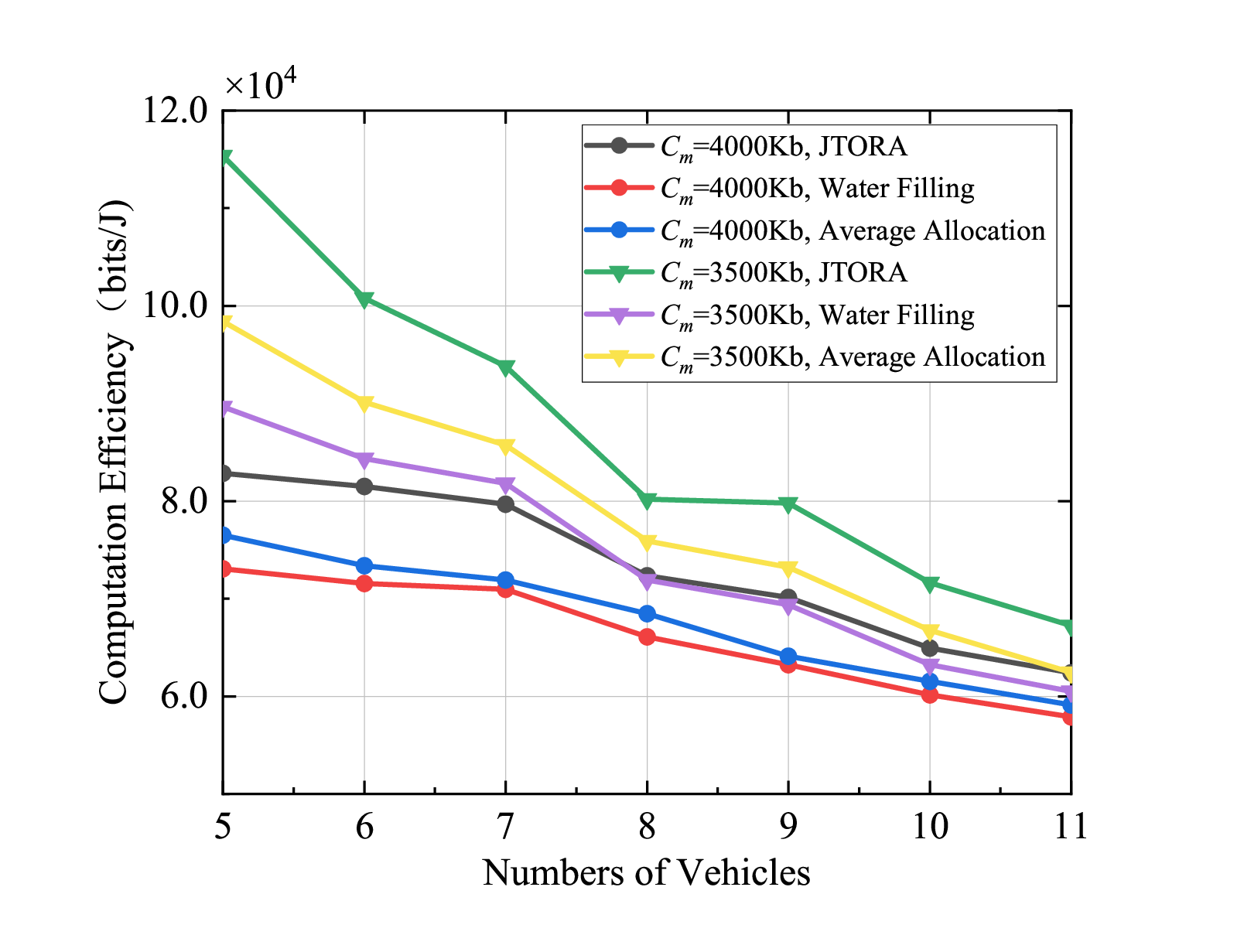}}
\caption{Computation efficiency with different allocation methods.}
\label{vehicle}
\end{figure}
\par In Fig. \ref{vehicle}, we compare the computation efficiency when the the number of vehicles increases under different schemes.  It can be clearly observed that the increase in the number of vehicles leads to a reduction in computation efficiency. Meanwhile, by setting the data volume sizes as $C_{n} = 3500$ Kb and $C_{n} = 4000$Kb, the computation efficiency shows a better performance when $C_{n} = 3500$Kb. The sub-channel resources that each vehicle can be allocated will decrease, resulting in a reduced transmission rate and an increased energy consumption. Meanwhile, the increase in the amount of task data also leads to the increase in computation energy consumption and computation delay. Our proposed JTORA scheme is significantly superior to the water-filling algorithm and the equal allocation, which is consistent with the previous analysis.

\section{Conclusion}
In this paper, we propose a novel VMDC framework with hybrid THz-RF transmission in STIN system. This system utilizes the OFDMA technique in the THz to achieve the optimal computation efficiency. For terrestrial task offloading, NOMA scheme is adopted. Then, we formulate a maximum computation efficiency optimization problem by jointly optimizing bandwidth allocation, vehicle task allocation, power allocation, and sub-channel matching in NOMA. Since the proposed problem are non-convex and coupled, in response to this challenge, we decouple the original problem into four sub-problems and propose an alternating optimization method. All independent variables are optimized and iterated until convergence to obtain the maximum computation efficiency. The simulation results show that the proposed strategy can achieve the best computation efficiency compared with the benchmark algorithms. In future work, we will consider a multi-tier distributed computing system with the collaboration of multiple LEO satellites.


%

\appendices
\section{Proof of the Theorem 1}
In the case of a large signal-to-noise ratio with infinite bandwidth, there are two limits:
\begin{equation}
\begin{aligned}
\lim_{B \to \infty} \frac{1}{x} Ib\left ( 1+x \right ) =Ibe=1.44.\label{52}
\end{aligned}
\end{equation}
\begin{equation}
\begin{aligned}
\lim_{B\to \infty} C=\lim_{B \to \infty} BIb\left ( 1+\frac{S}{n_0B}  \right )  .\label{53}
\end{aligned}
\end{equation}
According to (\ref{51}) and (\ref{52}), it is obtained
\begin{equation}
\begin{aligned}
\lim_{B \to \infty} C&=\lim_{B \to \infty} \left [  \frac{n_{0} B}{S}Ib\left ( 1+\frac{S}{n_{0} B}  \right )  \right ]\frac{S}{n_{0} }\\&=Ibe \frac{S}{n_0} =1.44\frac{S}{n_0}  .\label{54}
\end{aligned}
\end{equation}
Therefore, we can obtain that
\begin{align}
\overline{R} _{\rm m}=\frac{1.44P_{\rm m}\int_{0}^{t_{\rm m}^{\rm stay}} \sum_{f\in F} w_{f,k} \eta _{f,n}\frac{\left | h_{n,k}  \right | ^{2}d_{nR,m}^{- \rho ^{'} } }{n_0}dt }{t_{\rm m}^{\rm stay} } .
\end{align}

\section{Proof of the Lemma 1}
Suppose there exists a matching pair $\left ( \bm{I}_{k},SC_{f} \right ) $ in a matching $\mu^{\ast}$ which is $A\succ_{SC_{f} } \mu \left ( SC_{f}  \right )$, $A\subseteq \left \{ \bm{I}_{k}  \right \} \cup \left \{  SC_{f} \right \} $ and $\bm{I}_{k} \in A$ and $SC_{f} \succ_{\bm{I}_{k} } SC_{l}, SC_{l} \in \mu \left ( \bm{I}_{k} \right )$. In the algorithm 1, during the loop, unmatched clusters are randomly selected and an attempt is made to match them with sub-channels in the sub-channel's preference list. If the capacity of the selected sub-channel is not full, the cluster is directly matched with the sub-channel; if the sub-channel is full, the ranking of the new and matched clusters in the sub-channel's preference list is compared to decide whether to perform the replacement operation. Since the matching pair is random, the non-existence of the matching pair indicates that there is no blocking pair in that matching $\mu^{\ast}$. Therefore, matching of vehicles and sub-channels have finished successful in that matching $\mu^{\ast}$ and that matching $\mu^{\ast}$ is convergent and stable.

\section{Proof of the Theorem 2}
During each iteration, a partially unmatched cluster is randomly chosen from the cluster set and an attempt is made to allocate the corresponding sub-channel according to its preference. If the sub-channel has unfilled capacity, a match is established and the relevant state is updated. If the sub-channel is full, the rankings of the new cluster and the already matched clusters in the sub-channel's preference list are compared and adjusted accordingly. According to VSMA, since a swap operation makes the matching change from $\mu $ to $\mu^{\ast } $, $U_n\left ( f \right ) -U_n\left ( f -1\right ) > 0$ is valid. As a result, with the increase in the number of iterations, the set of available cluster matches will become smaller. Since the number of clusters and sub-channels is limited, the number of proposed match attempts will not exceed the total number of sub-channels. Therefore, the total number of iterations is finite, and the algorithm will conclude within a limited number of iterations and converge to a final stable match.



\ifCLASSOPTIONcaptionsoff
  \newpage
\fi

\bibliography{IEEEabrv,ref}
\end{document}